\newsavebox{\@brx}
\newcommand{\llangle}[1][]{\savebox{\@brx}{\(\m@th{#1\langle}\)}%
  \mathopen{\copy\@brx\kern-0.5\wd\@brx\usebox{\@brx}}}
\newcommand{\rrangle}[1][]{\savebox{\@brx}{\(\m@th{#1\rangle}\)}%
  \mathclose{\copy\@brx\kern-0.5\wd\@brx\usebox{\@brx}}}
\definecolor{nblue}{rgb}{0.2,0.2,0.7}
\definecolor{ngreen}{rgb}{0.2,0.6,0.2}
\definecolor{nred}{rgb}{0.7,0.2,0.2}
\definecolor{nblack}{rgb}{0,0,0}
\theoremstyle{definition}
\newtheorem{definition}{Definition}
\newtheorem{example}[definition]{Example}
\newtheorem{lemma}[definition]{Lemma}
\newtheorem{theorem}[definition]{Theorem}
\theoremstyle{remark}
\newcommand\Tr   {\operatorname{Tr}}
\newcommand{\one}{\mathds{I}}
\newcommand{\orcidd}[1]{\href{https://orcid.org/#1}{\includegraphics[width=8pt]{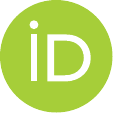}}}
\begin{document}
	
\title{Antilinear superoperator, quantum geometric invariance, and antilinear symmetry for higher-dimensional quantum systems}

\author{Lu Wei}
\email{cox@mail.ustc.edu.cn}
\affiliation{Department of Applied Mathematics \& Statistics, Stony Brook University, Stony Brook, NY 11794, USA}
\affiliation{Department of Computer Science, Stony Brook University, Stony Brook, NY 11794, USA}

\author{Zhian Jia\orcidd{0000-0001-8588-173X}}
\email{giannjia@foxmail.com}
\affiliation{Centre for Quantum Technologies, National University of Singapore, SG 117543, Singapore}
\affiliation{Department of Physics, National University of Singapore, SG 117543, Singapore}

\author{Dagomir Kaszlikowski}
\email{phykd@nus.edu.sg}
\affiliation{Centre for Quantum Technologies, National University of Singapore, SG 117543, Singapore}
\affiliation{Department of Physics, National University of Singapore, SG 117543, Singapore}

\author{Sheng Tan\orcidd{0009-0008-3318-9942}}
\email{tan296@bimsa.cn}
\affiliation{Beijing Institute of Mathematical Sciences and Applications, Beijing, 101408, China}
\affiliation{Yau Mathematical Sciences Center, Tsinghua University, Beijing, 100084, China}
\affiliation{Department of Mathematics, Purdue University, West Lafayette, IN 47907, USA}

\begin{abstract}
\noindent
We present a systematic investigation of antilinear superoperators and their applications in studying open quantum systems, particularly focusing on quantum geometric invariance, entanglement distribution, and symmetry.
We study several crucial classes of antilinear superoperators, including antilinear quantum channels, antilinearly unital superoperators, antiunitary superoperators, and generalized $\Theta$-conjugation.
Using the Bloch representation, we present a systematic investigation of quantum geometric transformations in higher-dimensional quantum systems. By choosing different generalized $\Theta$-conjugations, we obtain various metrics for the space of Bloch space-time vectors, including the Euclidean and Minkowskian metrics. Utilizing these geometric structures, we then investigate the entanglement distribution over a multipartite system constrained by quantum geometric invariance.
The strong and weak antilinear superoperator symmetries of the open quantum system are also discussed. Additionally, Kramers' degeneracy and conserved quantities are examined in detail.
\end{abstract}

\maketitle

\section{Introduction}

With the development of quantum information and quantum computation theory, the investigation of preparation, transformation, and measurement of quantum states becomes
more and more important \cite{preskill1998,Nielsen2010, watrous2018theory}.
According to Wigner theorem~\cite{wigner2012group,bargmann1964note}, 
quantum symmetry operations for a closed quantum system can only be unitary or antiunitary.
It is well known that for an open quantum system, the unitary operations reduce to completely positive trace-preserving (CPTP) maps. 
Therefore it is natural to consider the case corresponding to antiunitary operations. To our knowledge, this has not been systematically investigated so far.
By tracing the environment, the antiunitary operations of the closed system reduce to antilinear CPTP maps, which are special cases of more general antilinear superoperators.

Antilinear operators and superoperators play a crucial role in studying symmetries and transformations of quantum systems.
In the framework of quantum field theory,
charge conjugation symmetry ($\mathsf{C}$-symmetry), time-reversal symmetry ($\mathsf{T}$-symmetry), and the combination of parity symmetry with time-reversal symmetry ($\mathsf{PT}$-symmetry),
are all antilinear operators \cite{peskin2018introduction,sachs1987physics,geru2019time,bender2019pt}.
In quantum information theory, antilinear operators and superoperators both have fundamental and practical importance. For instance, Hill-Wootters conjugation, or more generally, $\Theta$-conjugation \cite{Hill1997entanglement, Uhlmann2000fidelity,mintert2005measures} is a crucial tool for characterizing and quantifying quantum entanglement.
Antilinear EPR transformation and antilinear quantum teleportation transformation can be used to investigate various quantum information protocols (see \cite{uhlmann2016anti} and references therein). 
A systematic investigation of antilinear operators acting on state vectors is presented in \cite{uhlmann2016anti}.
However, in an open quantum system, it is more natural to consider the antilinear superoperators which have not been systematically investigated before and it will be one of the main focuses of this work.

For the simplest case, the qubit system, Bloch representation provides an extremely convenient description of the quantum system \cite{Nielsen2010, preskill1998}, where the $\Theta$-conjugation has a concise representation using the Bloch vectors.
However, generalization of Bloch representation to higher-dimensional cases remains open \cite{Hioe1981N,Jakobczyk2001geometry,Kimura2003bloch}. Two reasons behind this are that in higher-dimensional cases, the set of Bloch vectors for quantum states is no longer a ball and higher-dimensional vectors are more complicated to visualize and manipulate.
In this work, we will discuss the Bloch representation for a higher-dimensional quantum system based on the Hilbert-Schmidt basis of the real vector space consisting of all Hermitian operators. This provides us a good framework to study the generalized $\Theta$-conjugation, which, by definition, is a class of antilinear superoperators.
To this end, we present a systematic investigation of the antilinear superoperators, including antilinear quantum channel, antilinear unital superoperator, antiunitary superoperator and generalized $\Theta$-conjugation.
The generalized $\Theta$-conjugation turns out to be closely related to the geometric transformations of a quantum state.
For qubit case, these geometric transformations and geometric invariance have been extensively explored \cite{Han1997stokes,Han1999wigner,Verstraete2002lorentz,Jaeger2003quantum,Eltschka2015monogamy}. The Lorentz transformation corresponds to stochastic local operation and classical communication (SLOCC) (see, e.g. \cite{li2018stochastic} and references therein). 
For the qudit case, using generalized $\Theta$-conjugation, we present a concise description of Lorentzian and Euclidean invariance in a quantum system. The Lorentzian invariance of quantum states can be used to study entanglement and other properties of multipartite systems \cite{Eltschka2015monogamy,eltschka2018distribution,eltschka2020maximum,aschauer2003local,wyderka2020characterizing}.

On the other hand, understanding the symmetries of open quantum systems has attracted much attention in recent years \cite{buvca2012note,Albert2014symmetries,Lieu2020symmetry,Lieu2020tenfold,Altland2021symmetry,de2022symmetry,McDonald2022exact,Sa2023symmetry,Kawabata2023symmetry}.
While linear symmetries characterized by unitary operators have been extensively investigated, symmetries characterized by linear and antilinear superoperators have received less attention.
A deep understanding of antilinear superoperator symmetry may shed new light on the understanding of generalized symmetries, such as non-invertible symmetries \cite{cordova2022snowmass,brennan2023introduction,mcgreevy2023generalized,luo2023lecture,shao2024whats,SchaferNameki2024ICTP,Bhardwaj2024lecture}, Hopf and weak Hopf symmetries \cite{Bais2002quantum,bais2003hopf,jia2023boundary,Jia2023weak,jia2024weakTube,jia2024generalized}, higher-form symmetries \cite{gaiotto2015generalized,kapustin2017higher,gomes2023introduction,Bhardwaj2024lecture}, and more.
For Hamiltonians without unitary symmetries, they are classified by the tenfold classification based on antiunitary symmetries: time-reversal symmetry $\mathsf{T}$, charge conjugation symmetry $\mathsf{C}$, and their composition, called chiral symmetry $\mathsf{S} = \mathsf{C} \circ \mathsf{T}$ \cite{Altland1997tenfold,Ryu2010tenfold}.
When generalized to open systems, this classification problem will also strongly depend on the antilinear symmetries of open quantum systems \cite{Lieu2020tenfold,Altland2021symmetry,de2022symmetry,Sa2023symmetry,Kawabata2023symmetry}.
Motivated by the aforementioned challenges, in this work, we also provide a detailed discussion of the weak and strong antilinear superoperator symmetries of open quantum systems.

The work is organized as follows. In Sec.~\ref{sec:theta}, we introduce the basic definitions of antilinear superoperators and discuss their representations and properties.  We systematically study antilinearly CPTP maps, antilinear unital superoperators, antiunitary superoperators, and generalized $\Theta$-conjugations.
Then as an application, the geometric representation of generalized $\Theta$-conjugation and its relationship with quantum geometric invariance are discussed in Sec.~\ref{sec:geo}.
In Sec.~\ref{sec:bloch}, we briefly discuss Bloch representation of higher-dimensional quantum states based on Hilbert-Schmidt basis. The generalized $\Theta$-conjugation can be realized as a geometric transformation for Block tensors.
Using quantum geometric invariance of quantum states, we study the distribution of entanglement over a multipartite system in Sec.~\ref{sec:mono}. 
In Sec.~\ref{sec:symmetry} we discuss antilinear superoperator symmetry of open quantum systems, which is also a crucial application of antilinear superoperators. We discuss both weak and strong antilinear symmetries and prove the Kramers' degeneracy given by the antilinear superoperator symmetries.
Finally, in Sec.~\ref{sec:conclusion}, we give some concluding remarks and discussions.

\section{Antilinear superoperators}
\label{sec:theta}

The antilinearity has many applications in physics since Wigner observed that time-reversal operation in quantum mechanics is characterized by antiunitary operators \cite{wigner2012group,bargmann1964note}.
In quantum information theory, Hill-Wootters conjugation  \cite{Hill1997entanglement} and
Uhlmann's generalization of $\Theta$-conjugation  \cite{Uhlmann2000fidelity} are crucial examples of antilinear operators which play a key role in investigating quantum correlations.
Some fragmentary discussions of antilinear operators are presented in Refs.~\cite{paulsen2002completely,uhlmann2016anti}.
However, a systematic investigation of antilinearity, especially antilinear superoperators and antilinear quantum effects, is still largely lacking.
In this section and Appendix~\ref{app:AntiSup}, we present a careful discussion of antilinear superoperators, for which antilinear operators are special cases (the antilinear superoperator with Kraus rank equal to one). Various special classes of antilinear superoperators will be discussed, and these antilinear superoperators may have more potential applications in investigating quantum information theory.  Particular applications in the study of geometric invariance and distribution of quantum correlations for higher-dimensional quantum systems will be discussed later in this work.

\subsection{Representations of antilinear superoperators}

To begin with, let us introduce the basic definitions related to the antilinear (or conjugate linear) superoperators, which are natural generalizations of their linear counterparts. In what follows, $\mathbf{B}(\mathcal{X})$ denotes the operator space associated to a finite-dimensional system $\mathcal{X}$.

\begin{definition}
	Let $\mathcal{M}:\mathbf{B}(\mathcal{X}) \to  \mathbf{B}(\mathcal{Y})$ be a mapping between Banach spaces  $\mathbf{B}(\mathcal{X})$ and $\mathbf{B}(\mathcal{Y})$.
	It is called an antilinear superoperator if
	\begin{equation}
		\mathcal{M}(\alpha \rho_1 +\beta \rho_2)=\alpha^*\mathcal{M}( \rho_1) +\beta^* \mathcal{M}( \rho_2)
	\end{equation}
	for all $\alpha,\beta \in \mathbb{C}$ and $\rho_1,\rho_2 \in \mathbf{B}(\mathcal{X})$. 
	Since $\mathbf{B}(\mathcal{X})$ and $\mathbf{B}(\mathcal{Y})$ are both inner-product spaces with Hilbert-Schmidt inner product,
	we can introduce the Hermitian adjoint superoperator $\mathcal{M}^{\dagger}$ by
	\begin{equation}
		\langle \mathcal{M}^{\dagger} (\sigma) ,\rho\rangle_{\mathbf{B}(\mathcal{X})}
		=\langle \sigma, \mathcal{M} (\rho) \rangle_{\mathbf{B}(\mathcal{Y})},
	\end{equation}
	and similarly, we can introduce the antilinear Hermitian adjoint superoperator $\mathcal{M}^{\ddag}$ by
	\begin{equation}
		\overline{	\langle \mathcal{M}^{\ddag} (\sigma),\rho \rangle}_{\mathbf{B}(\mathcal{X})}= \langle \sigma, \mathcal{M}(\rho)\rangle_{\mathbf{B}(\mathcal{Y})},
	\end{equation}
	for all  $\rho \in \mathbf{B}(\mathcal{X})$, and $\sigma \in \mathbf{B}(\mathcal{Y})$.
	The set of antilinear superoperators forms a linear space, which we denote as  $\mathbf{B}^{(2)}_{\mathrm{anti}}(\mathcal{X} , \mathcal{Y})$; when $\mathcal{X}=\mathcal{Y}$, we will simply denote it as $\mathbf{B}^{(2)}_{\mathrm{anti}}(\mathcal{X} )$.\,\footnote {The notation here has a categorical meaning: the morphisms between two Hilbert spaces are called 1-morphisms (operators), and the morphisms between 1-morphism spaces are called 2-morphisms (superoperators). See Appendix~\ref{app:AntiSup}.}
\end{definition}

The composition of an antilinear superoperator with a linear superoperator is antilinear, while the composition of an antilinear superoperator with another antilinear superoperator is linear.
This means that different from the space of all linear superoperators $\mathbf{B}^{(2)}(\mathcal{X} )$ which is an algebra, $\mathbf{B}^{(2)}_{\mathrm{anti}}(\mathcal{X} )$ is not an algebra, because the composition is not closed.
However, one can show that  $\mathbf{B}^{(2)}_{\mathrm{anti}}(\mathcal{X} )$ is a bimodule over $\mathbf{B}^{(2)}(\mathcal{X} )$. 
For antilinear $\mathcal{M}$, it is easy to show that the Hermitian adjoint $\mathcal{M}^{\dagger}$ becomes linear, but the antilinear Hermitian adjoint  $\mathcal{M}^{\ddag}$ is still antilinear. We also have $(\mathcal{M}^{\ddag})^{\ddag}=\mathcal{M}$, and ($\mathcal{M}\circ \mathcal{N})^{\ddag} =\mathcal{N}^{\ddag}  \circ \mathcal{M}^{\ddag}$. 
For linear $\mathcal{N}$ and antilinear $\mathcal{M}$, we have $(\mathcal{M}\circ \mathcal{N})^{\ddag}=\mathcal{N}^{\dagger}\circ \mathcal{M}^{\ddag}$, and
$(\mathcal{N}\circ \mathcal{M})^{\ddag}=\mathcal{M}^{\ddag}\circ \mathcal{N}^{\dagger}$.
An antilinear superoperator $\mathcal{M}$ is called Hermitian if $ \mathcal{M}^{\ddagger}=\mathcal{M}$, and skew Hermitian if $\mathcal{M}^{\ddagger}=-\mathcal{M}$.


Tensor product between two antilinear superoperators $\mathcal{M}$ and $\mathcal{N}$ is well-defined in the way that $\mathcal{M}\otimes \mathcal{N} (\rho \otimes \sigma)= \mathcal{M}(\rho) \otimes \mathcal{N}(\sigma)$. However, the tensor product between linear and antilinear superoperators cannot be consistently defined in this way. In particular, we cannot define the tensor product between antilinear $\mathcal{M}$ and (linear) identity superoperator $\mathcal{I}$.
This means that antilinearity is a global property of the quantum system, making it a crucial tool for characterizing quantum correlations.

One of the most crucial examples of an antilinear superoperator is the complex conjugation $\mathcal{K}$ in a particular basis, given by $\mathcal{K} (\rho) =\rho^{*}$.
It can be checked that $\mathcal{K}^{-1}=\mathcal{K}^{\ddag}=\mathcal{K}$. 
As we will show later, complex conjugation superoperator plays a key role in characterizing antilinear superoperators.  
The following result is the main tool that we will utilize to investigate antilinear superoperators.

\begin{theorem}
	\label{lemma:antilinear}
	Any antilinear superoperator $\mathcal{M}$ can be decomposed as $\mathcal{M}= \mathcal{M}_{L} \circ \mathcal{K} = \mathcal{K} \circ \mathcal{M}_L^*$, with $\mathcal{M}_L$ and $\mathcal{M}_L^*$ being linear superoperators, called the left and right linearizations of $\mathcal{M}$, respectively. The decomposition is referred to as the fundamental decomposition of an antilinear superoperator.
\end{theorem}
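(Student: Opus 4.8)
The plan is to lean entirely on one structural fact about the complex conjugation superoperator $\mathcal{K}$ recorded just above, namely that it is an involution, $\mathcal{K}\circ\mathcal{K}=\mathcal{I}$, together with the composition rules already stated: the composite of two antilinear superoperators is linear, and the composite of a linear with an antilinear superoperator is antilinear. The decomposition is then obtained simply by inserting the identity $\mathcal{I}=\mathcal{K}\circ\mathcal{K}$ on the appropriate side of $\mathcal{M}$.

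Concretely, I would first set $\mathcal{M}_L:=\mathcal{M}\circ\mathcal{K}$. Since $\mathcal{M}$ and $\mathcal{K}$ are both antilinear, $\mathcal{M}_L$ is linear, and because $\mathcal{K}$ is an involution, $\mathcal{M}_L\circ\mathcal{K}=\mathcal{M}\circ\mathcal{K}\circ\mathcal{K}=\mathcal{M}$, which is the first claimed identity. This $\mathcal{M}_L$ is moreover the unique linear superoperator with that property, since $\mathcal{K}$ is invertible: $\mathcal{N}\circ\mathcal{K}=\mathcal{M}$ forces $\mathcal{N}=\mathcal{M}\circ\mathcal{K}^{-1}=\mathcal{M}\circ\mathcal{K}=\mathcal{M}_L$. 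Symmetrically, I would set $\mathcal{M}_L^{*}:=\mathcal{K}\circ\mathcal{M}$, which is again a composite of two antilinear maps hence linear, and satisfies $\mathcal{K}\circ\mathcal{M}_L^{*}=\mathcal{K}\circ\mathcal{K}\circ\mathcal{M}=\mathcal{M}$, giving the second identity. To justify the notation I would check that $\mathcal{M}_L^{*}$ is indeed the complex conjugate of $\mathcal{M}_L$ in the sense used elsewhere, i.e. $\mathcal{M}_L^{*}=\mathcal{K}\circ\mathcal{M}_L\circ\mathcal{K}$, which follows from $\mathcal{K}\circ\mathcal{M}_L\circ\mathcal{K}=\mathcal{K}\circ\mathcal{M}\circ\mathcal{K}\circ\mathcal{K}=\mathcal{K}\circ\mathcal{M}$; in particular the two linearizations are conjugate to one another, $\mathcal{M}_L^{*}=\mathcal{K}\,\mathcal{M}_L\,\mathcal{K}$, and coincide precisely when $\mathcal{M}_L$ is real in the chosen basis.

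There is essentially no hard step: the argument is one line of algebra once the involution property of $\mathcal{K}$ and the parity rule for composites are in hand. The only point that merits a sentence of care, and which I would flag as the ``obstacle'' if any, is that $\mathcal{K}$ and hence both linearizations are \emph{basis dependent}: $\mathcal{K}$ is defined as conjugation in a fixed Hilbert--Schmidt basis, so a different choice of basis yields different $\mathcal{M}_L$, $\mathcal{M}_L^{*}$ even though their existence is basis independent. In the finite-dimensional setting of the paper every superoperator is automatically bounded, so linearity is all that needs checking; should one wish the statement for infinite-dimensional $\mathcal{X},\mathcal{Y}$, the same computation goes through verbatim for bounded $\mathcal{M}$, since $\mathcal{K}$ is an isometric involution of the Hilbert--Schmidt space and therefore preserves boundedness.
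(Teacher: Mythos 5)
Your proposal is correct and is essentially the paper's own argument in coordinate-free form: since $\mathcal{K}$ fixes the basis $E_{ij}=|i\rangle\langle j|$, your $\mathcal{M}_L=\mathcal{M}\circ\mathcal{K}$ and $\mathcal{M}_L^{*}=\mathcal{K}\circ\mathcal{M}$ coincide exactly with the linearizations the paper defines by $\mathcal{M}_L(E_{ij})=\mathcal{M}(E_{ij})$ and $\mathcal{M}_L^{*}(E_{ij})=\mathcal{M}(E_{ij})^{*}$, and both proofs reduce to the involution property of $\mathcal{K}$ together with the parity rule for composites. Your added remarks on uniqueness, on $\mathcal{M}_L^{*}=\mathcal{K}\circ\mathcal{M}_L\circ\mathcal{K}$, and on basis dependence are correct bonuses not present in the paper's proof.
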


\begin{proof}
	By choosing a basis $\{E_{ij}=|i\rangle \langle j|\}$, notice that the complex conjugation of an operator in this basis has the property: $\mathcal{K}(E_{ij})=E_{ij}$ and $\mathcal{K}(\rho)=\mathcal{K} (\sum_{ij} \rho_{ij} E_{ij})=\sum_{ij} \rho_{ij}^* E_{ij}=\rho^*$.  We can define a linear superoperator $\mathcal{M}_L$ such that $\mathcal{M}_L(E_{ij}):=\mathcal{M} (E_{ij})$. Then we see that $ \mathcal{M} (\rho) =\sum_{ij} \rho^*_{ij}\mathcal{M}_L(E_{ij}) = \mathcal{M}_L \circ \mathcal{K} (\rho)$ for all $\rho$, which shows that $\mathcal{M}= \mathcal{M}_L \circ \mathcal{K}$.
	We can also define a linear operator $\mathcal{M}_L^*$ such that $\mathcal{M}_L^* (E_{ij}) =\mathcal{M}(E_{ij})^*$. In this way,
	$\mathcal{M}(\rho)=\sum_{ij} \rho_{ij}^* \mathcal{M}(E_{ij}) =\sum_{ij} \rho_{ij}^* [\mathcal{M}_L^*(E_{ij})]^*=\mathcal{K} (\sum_{ij} \rho_{ij} \mathcal{M}_L^*(E_{ij}))=\mathcal{K}\circ \mathcal{M}_L^*(\rho)$ for all $\rho$. This completes the proof.
\end{proof}

With the same idea we can show that if $A:\mathcal{X}\to \mathcal{Y}$ is an antilinear operator, then there exist corresponding linear operators $A_L$ and $A_L^*$ such that $A=A_L K=K A_L^*$ where $K$ is the complex conjugate operator in the standard basis.
Notice that we assume that all antilinear superoperators act from left to right. When considering the right action, we will always use its linearization.
For antilinear operator $A$,
there is a corresponding antilinear superoperator $\mathcal{M} (\rho) =A_L \mathcal{K} (\rho) A_L^{\dagger}$. Thus, the antilinear operator can be regarded as a special case of an antilinear superoperator.

Now let us discuss the representation of antilinear superoperators.
The simplest one is the natural representation. By introducing a vector map
\begin{equation}
\| \bullet \rrangle :  |i\rangle \langle j| \mapsto |i\rangle |j\rangle,
\end{equation}
it is clear that the mapping
\begin{equation}
	N(\mathcal{M}): \| \rho\rrangle  \mapsto \|  \mathcal{M} (\rho)\rrangle 
\end{equation}
is antilinear (since it is the composition of a linear map and an antilinear map). Thus, we obtain an antilinear operator representation of $\mathcal{M}$. From theorem~\ref{lemma:antilinear} we see that $N(\mathcal{M}_L)$ coincides with the linearization $N(\mathcal{M})_L$, 
\begin{equation}
	N(\mathcal{M})_L=	\sum_{i,j}\sum_{k,l} \langle E_{k,l}, \mathcal{M}_L (E_{i,j}) \rangle E_{k,i}\otimes E_{l,j}.
\end{equation}
The antilinear operator $N(\mathcal{M})_L K$ will be called the natural representation of $\mathcal{M}$.
The natural representation for the antilinear Hermitian conjugate $\mathcal{M}^{\ddag}$ is of the form
\begin{equation}
	N(\mathcal{M}^{\ddag} )_L=	\sum_{i,j}\sum_{k,l} \langle E_{i,j}, \mathcal{M}_L (E_{k,l}) \rangle E_{k,i}\otimes E_{l,j},
\end{equation}
which means that $N(\mathcal{M}^{\ddag} )=N(\mathcal{M})^{\ddag}$.
It is easy to verify that $\mathcal{M}^{\ddag}\circ \mathcal{M}=\mathcal{I}$ if and only if $N(\mathcal{M}^{\ddag}) N(\mathcal{M})=\one$.

From theorem~\ref{lemma:antilinear}, we see that an antilinear superoperator $\mathcal{M}$ can be linearized as 
\begin{equation}
    \mathcal{M}=\mathcal{M}_L\circ \mathcal{K}
\end{equation}
with $\mathcal{K}$ the complex conjugation superoperator and $\mathcal{M}_L$ a linear superoperator. Suppose that   $\{A_j\}$ and $\{B_j\}$ are the Kraus operators for $\mathcal{M}_L$, then we obtain the Kraus decomposition for $\mathcal{M}$: 
\begin{equation} \label{eq:kraus}
	\mathcal{M} (\rho)= \mathcal{M}_L\circ \mathcal{K}(\rho) = \sum_j A_j \rho^* B_j^{\dagger}.
\end{equation}
The minimal number of terms $A_j,B_j$ that appear in the Kraus decomposition is called the Kraus rank of $\mathcal{M}$.
The Kraus representation for $\mathcal{M}^{\ddag}$ is thus
\begin{equation}
	\mathcal{M}^{\ddag}(\rho)=\sum_j A_j^{T} \rho^* B_j^*.
\end{equation}
Kraus representations exist for all antilinear superoperators, but they are not unique in general.

Another representation is the Choi-Jamio{\l}kowski representation, which is a useful representation for characterizing the positivity of the superoperators. For $\mathcal{M} \in \mathbf{B}^{(2)}_{\mathrm{anti}} (\mathcal{X},\mathcal{Y} )$, we have 
\begin{equation}
	J(\mathcal{M}) =(\mathcal{M}_L\otimes \mathcal{I}_{\mathcal{X} } ) (|\Omega\rangle \langle \Omega| ) = \sum_{i,j} \mathcal{M}_L(E_{ij}) \otimes E_{ij},
\end{equation}
where $|\Omega\rangle=\sum_i |i\rangle |i\rangle \in \mathcal{X}\otimes \mathcal{X}$ .   The operator $J(\mathcal{M}) $ is called the Choi-Jamio{\l}kowski representation of $\mathcal{M}$. It is easy to verify that 
\begin{equation}
	\mathcal{M}(\rho)=\operatorname{Tr}_{\mathcal{X}} (J(\mathcal{M})   (\mathds{I}_{\mathcal{Y}} \otimes \rho^{\dagger} )   ).
\end{equation}

From an open-system's point of view, a superoperator is a quantum operation obtained by partially tracing the environment of a closed system. This also works for antilinear superoperator $\mathcal{M}$, where the resulting representation is called the Stinespring representation.
Suppose that $U,V \in \mathbf{B}(\mathcal{X},\mathcal{Y}\otimes \mathcal{Z})$ are Stinespring representation operators of $\mathcal{M}_L$, then from theorem~\ref{lemma:antilinear} we have
\begin{equation} \label{eq:stinespring}
	\mathcal{M}(\rho) =\operatorname{Tr}_{\mathcal{Z}} (U\rho^* V^{\dagger}).
\end{equation}

The Kraus representation is usually regarded as the most fundamental representation. In the following result, we provide some explicit formulae to translate the Kraus representation into other representations.

\begin{lemma}\label{lemma:relation}
	Suppose $\mathcal{M}\in \mathbf{B}^{(2)}_{\mathrm{anti}} (\mathcal{X},\mathcal{Y} )$ is an antilinear superoperator and its Kraus representation is given by Eq.~(\ref{eq:kraus}), then we have:
	\begin{enumerate}
		\item The natural representation of $\mathcal{M}$ is $N(\mathcal{M})_L= \sum_j A_j\otimes B_j^*$.
		\item The Choi-Jamio{\l}kowski representation is  $ J(\mathcal{M}) =\sum_j \|A_j\rrangle \llangle B_j\| $.
		\item The Stinespring dilation is given by Eq.~(\ref{eq:stinespring})
		with $U=\sum_j A_j \otimes e_j$ and $V=\sum_j B_j\otimes e_j$, where $\{e_j\}$ is an orthonormal basis of $\mathcal{Z}$.
	\end{enumerate}
\end{lemma}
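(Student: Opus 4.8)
The plan is to verify each of the three identities in Lemma~\ref{lemma:relation} by a direct computation, using Theorem~\ref{lemma:antilinear} to reduce everything to statements about the linearization $\mathcal{M}_L$, which acts as $\mathcal{M}_L(\rho)=\sum_j A_j\,\rho\,B_j^{\dagger}$. The key observation is that the Kraus operators $\{A_j\},\{B_j\}$ for the \emph{linear} superoperator $\mathcal{M}_L$ are, by hypothesis, exactly the same operators appearing in the antilinear Kraus form (\ref{eq:kraus}), so all three claims are really the standard linear formulas applied to $\mathcal{M}_L$, combined with the bookkeeping that $\mathcal{M}=\mathcal{M}_L\circ\mathcal{K}$.

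First I would establish (1). By the definition of the natural representation, $A(\mathcal{M})_L$ is the linearization of the antilinear operator $v(\rho)\mapsto v(\mathcal{M}(\rho))$; since $\mathcal{M}=\mathcal{M}_L\circ\mathcal{K}$, this equals the natural (linear) representation of $\mathcal{M}_L$ precomposed with $K$, so it suffices to show the natural representation of $\mathcal{M}_L$ is $\sum_j A_j\otimes B_j^{*}$. For this I compute $v(A_j E_{ij} B_j^{\dagger})$: writing $A_j E_{ij}B_j^{\dagger}$ in the $E$-basis and applying $v$, one gets $v(A_j\rho B_j^{\dagger})=(A_j\otimes \overline{B_j})\,v(\rho)$, which is the textbook identity $v(A\rho B^{\dagger})=(A\otimes B^{*})v(\rho)$; summing over $j$ gives the claim. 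Next, (2): plug the Kraus form of $\mathcal{M}_L$ into the definition $J(\mathcal{M})=\sum_{i,j}\mathcal{M}_L(E_{ij})\otimes E_{ij}=\sum_j\sum_{i,k}A_j E_{ij}B_j^{\dagger}\otimes E_{ik}$, and recognize the inner double sum as $v(A_j)\,v(B_j)^{\dagger}$ by expanding $v(A_j)=\sum_{i}A_j|i\rangle\otimes|i\rangle$ and similarly for $v(B_j)$; equivalently this is the familiar fact that the Choi operator of $\rho\mapsto A\rho B^{\dagger}$ is $|A\rangle\!\rangle\langle\!\langle B|$ with the vectorization $v$. Finally, (3): with $U=\sum_j A_j\otimes e_j$ and $V=\sum_j B_j\otimes e_j$ one computes $\operatorname{Tr}_{\mathcal{Z}}(U\rho^{*}V^{\dagger})=\sum_{j,k}A_j\rho^{*}B_k^{\dagger}\,\langle e_k,e_j\rangle=\sum_j A_j\rho^{*}B_j^{\dagger}=\mathcal{M}(\rho)$ by (\ref{eq:kraus}) and orthonormality of $\{e_j\}$, and one checks $U,V$ indeed dilate $\mathcal{M}_L$.

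I do not expect a serious obstacle here; the content is organizational rather than deep. The one point requiring care is fixing conventions so the complex conjugates and transposes land in the right places: the vectorization map $v$ sends $|i\rangle\langle j|\mapsto|i\rangle|j\rangle$ (no conjugation), so the identity it satisfies is $v(A\rho B^{\dagger})=(A\otimes B^{*})v(\rho)$ rather than $(A\otimes \bar B)$ — I would state this auxiliary identity explicitly and derive it once, then quote it in parts (1) and (2). A second minor subtlety is consistency with the $\mathcal{M}^{\ddag}$ formulas already displayed in the text (e.g.\ $A(\mathcal{M}^{\ddag})_L=A(\mathcal{M})_L^{\ddag}$ and the Kraus form $\mathcal{M}^{\ddag}(\rho)=\sum_j A_j^{T}\rho^{*}B_j^{*}$): after proving part (1) I would sanity-check that taking the $\ddag$ of $\sum_j A_j\otimes B_j^{*}$ reproduces the claimed natural representation of $\mathcal{M}^{\ddag}$, which confirms the conventions are coherent. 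Beyond that, the proof is just three short substitutions.
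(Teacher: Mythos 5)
Your proof is correct and matches the paper's approach: the paper's own proof is a one-line assertion that the claims follow by straightforward calculation, and your three substitutions --- the identity $v(A\rho B^{\dagger})=(A\otimes B^{*})\,v(\rho)$, the recognition of the Choi operator of $\rho\mapsto A\rho B^{\dagger}$ as the outer product $v(A)\,v(B)^{\dagger}$, and orthonormality of $\{e_j\}$ --- are exactly that calculation. The only blemish is a notational collision in part (2), where $j$ serves simultaneously as the Kraus index and a matrix-unit index (so the displayed $E_{ij}\otimes E_{ik}$ should read $E_{ik}\otimes E_{ik}$ with the Kraus sum relabeled); fix the indices and the argument is complete.
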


\begin{proof}
	These claims can be verified by straightforward calculation.
\end{proof}

\subsection{Antilinear quantum channel}
We have provided several ways to represent an antilinear superoperator.
Let us now consider the antilinear channel, which is a natural generalization of (linear) quantum channel.
\begin{definition}
	The following are some crucial classes of antilinear superoperators:
	\begin{itemize}
		\item $\mathcal{M}$ is called antilinearly trace-preserving (TP) if $\Tr (\mathcal{M}(\rho)) =(\Tr (\rho))^*$ for all $\rho \in \mathbf{B}(\mathcal{X})$.
		
		\item
		$\mathcal{M}$ is called antilinearly completely positive (CP) if it is positive, i.e., it maps positive semidefinite operators to positive semidefinite operators, and $\mathcal{K} \otimes \mathcal{M}$ is positive. The antilinear CP superoperators are also called antilinear quantum operations.
		\item The antilinearly  CPTP superoperators are called antilinear quantum channels. 
	\end{itemize}
\end{definition}

From theorem~\ref{lemma:antilinear}, we have the following characterizations of an antilinear quantum channel.

\begin{theorem} \label{theorem:CPTP}
	Suppose the antilinear superoperator $\mathcal{M}$ has the decomposition	$\mathcal{M}=\mathcal{M}_L\circ \mathcal{K} $. Then the following hold: 
	\begin{enumerate}
		\item $\mathcal{M}$ is antilinearly TP if and only if $\mathcal{M}_L$ is TP.
		
		\item $\mathcal{M}$ is antilinearly CP if and only if $\mathcal{M}_L$ is CP.
		
		\item   $\mathcal{M}$ is an antilinear quantum channel if and only if $\mathcal{M}_L$ is a linear quantum channel.	
	\end{enumerate}
\end{theorem}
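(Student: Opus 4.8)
The plan is to reduce each of the three equivalences to the corresponding statement for the linear superoperator $\mathcal{M}_L$, exploiting the fact that $\mathcal{K}$ is a very well-behaved antilinear superoperator: it is an involution ($\mathcal{K}^2 = \mathcal{I}$), it is trace-preserving in the ordinary sense ($\Tr(\rho^*) = (\Tr \rho)^*$), and it sends positive semidefinite operators to positive semidefinite operators (since $\rho \geq 0$ iff $\rho^* = \rho^T \geq 0$). Because $\mathcal{M} = \mathcal{M}_L \circ \mathcal{K}$ and $\mathcal{M}_L = \mathcal{M} \circ \mathcal{K}$, every property of $\mathcal{M}$ can be transported back and forth through composition with $\mathcal{K}$.

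For part (1), I would simply compute: for any $\rho$, $\Tr(\mathcal{M}(\rho)) = \Tr(\mathcal{M}_L(\rho^*))$. If $\mathcal{M}_L$ is TP this equals $\Tr(\rho^*) = (\Tr\rho)^*$, giving antilinear TP. Conversely, if $\mathcal{M}$ is antilinearly TP, then for arbitrary $\sigma$ write $\sigma = \rho^*$ (possible since conjugation is a bijection on $\mathbf{B}(\mathcal{X})$) and conclude $\Tr(\mathcal{M}_L(\sigma)) = \Tr(\mathcal{M}(\rho)) = (\Tr\rho)^* = \Tr(\rho^*) = \Tr(\sigma)$, so $\mathcal{M}_L$ is TP.

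For part (2), positivity of $\mathcal{M}$ means $\mathcal{M}(\rho) \geq 0$ whenever $\rho \geq 0$; since $\rho \geq 0 \iff \rho^* \geq 0$, this is equivalent to $\mathcal{M}_L(\rho^*) \geq 0$ for all $\rho \geq 0$, i.e. $\mathcal{M}_L$ positive. For the "complete" part I would show $\mathcal{K} \otimes \mathcal{M}$ is positive iff $\mathcal{I} \otimes \mathcal{M}_L$ is positive. The identity $\mathcal{K} \otimes \mathcal{M} = \mathcal{K} \otimes (\mathcal{M}_L \circ \mathcal{K}) = (\mathcal{I} \otimes \mathcal{M}_L) \circ (\mathcal{K} \otimes \mathcal{K})$ does the job: $\mathcal{K} \otimes \mathcal{K}$ is exactly the global complex conjugation on $\mathbf{B}(\mathcal{X} \otimes \mathcal{X})$, which preserves positivity and is an involutive bijection, so composing with it on the right does not change whether the map is positive. (One should note that although $\mathcal{K} \otimes \mathcal{M}$ mixes an antilinear factor with an antilinear factor consistently, the decomposition $\mathcal{K}\otimes\mathcal{M} = (\mathcal{I}\otimes\mathcal{M}_L)\circ(\mathcal{K}\otimes\mathcal{K})$ is the natural one and the earlier remark that $\mathcal{I}\otimes\mathcal{M}$ is ill-defined does not obstruct us here, since both tensor factors on the left are antilinear.) Then $\mathcal{I} \otimes \mathcal{M}_L$ positive is the definition of $\mathcal{M}_L$ being CP. Part (3) is then immediate by combining (1) and (2): $\mathcal{M}$ is an antilinear channel iff it is antilinearly CP and antilinearly TP iff $\mathcal{M}_L$ is CP and TP iff $\mathcal{M}_L$ is a linear channel.

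The main obstacle — really the only subtle point — is the "complete positivity" equivalence in part (2): one must be careful about what "$\mathcal{K} \otimes \mathcal{M}$ is positive" means and verify that the tensor decomposition through $\mathcal{K} \otimes \mathcal{K}$ is legitimate and that $\mathcal{K}\otimes\mathcal{K}$ really is the honest conjugation map preserving the positive cone, rather than some basis-dependent artifact. Everything else is a short formal manipulation, and the proof can reasonably be compressed to a few lines once this point is settled.
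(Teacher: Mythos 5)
Your proposal is correct and follows essentially the same route as the paper: the transport of TP and positivity through the bijection $\mathcal{K}$, and in particular the identity $\mathcal{K}\otimes\mathcal{M}=(\mathcal{I}\otimes\mathcal{M}_L)\circ(\mathcal{K}\otimes\mathcal{K})$ for complete positivity, is exactly the argument the paper gives (the paper merely adds an alternative necessity argument via the Kraus decomposition). Your explicit justification that $\mathcal{K}\otimes\mathcal{K}$ is the honest global conjugation preserving the positive cone is, if anything, slightly more careful than the original.
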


\begin{proof}
	(1)	For sufficiency, suppose that $\mathcal{M}_L$ is a TP superoperator. Then it is easy to check that $\mathcal{M}=\mathcal{M}_L\circ \mathcal{K}$ is an antilinear TP superoperator. 
	For necessity, using the Kraus decomposition $\mathcal{M} (\rho)=\sum_j A_j \rho^* B_j^{\dagger}$, since $\Tr ( \mathcal{M} (\rho)) =\Tr (\rho^*)$ for all $\rho^*$, the map $\sum_j A_j  (\cdot) B_j^{\dagger}$ is TP map; this map is precisely $\mathcal{M}_L$.
	
	(2) Notice that $\rho^*$ is positive semidefinite if $\rho$ is positive semidefinite. Then $\mathcal{M}_L$ being CP implies directly that $\mathcal{M}$ is CP. For the other direction, consider the Kraus decomposition $\mathcal{M} (\rho)=\sum_j A_j \rho^* B_j^{\dagger}$. The CP condition implies that $B_j=A_j$. In this way, $\mathcal{M} (\rho)=\sum_j A_j \rho^* A_j^{\dagger}$ for all $\rho^*$, which further implies that $\mathcal{M}_L$ is CP. More concisely, from the fact that $\mathcal{K}\otimes \mathcal{M}=(\mathcal{I} \otimes \mathcal{M}_L) \circ  (\mathcal{K}\otimes \mathcal{K})$ is positive, we obtain that $\mathcal{I}\otimes \mathcal{M}_L$ is positive.
	
	(3) This is a direct result of points (1) and (2).
\end{proof}

In the case of (linear) superoperators, the various representations we have discussed above can be used to characterize the antilinearly CP and TP superoperators. From theorem~\ref{theorem:CPTP}, this becomes a straightforward generalization.

\begin{lemma} \label{lemma:CP}
	For $\mathcal{M} \in \mathbf{B}^{(2)}_{\mathrm{anti}} (\mathcal{X},\mathcal{Y} )$, the following statements are equivalent:
	\begin{enumerate}
		\item $\mathcal{M}$ is antilinearly CP superoperator.
		\item In the Kraus representation Eq.~(\ref{eq:kraus}) of $\mathcal{M}$, $A_j=B_j$, thus
		\begin{equation}
			\mathcal{M}(\rho) =\sum_j A_j \rho^* A_j^{\dagger}.
		\end{equation}
		\item The Choi-Jamio{\l}kowski representation $J(\mathcal{M})$ is a positive semidefinite operator.
		\item In the Stinespring representation Eq.~(\ref{eq:stinespring}) of $\mathcal{M}$, $U=V$, thus 
		\begin{equation}
			\mathcal{M}(\rho) =\operatorname{Tr}_{\mathcal{Z}} (U \rho^* U^{\dagger}).
		\end{equation}
	\end{enumerate}
\end{lemma}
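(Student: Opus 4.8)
The strategy is to push everything through the fundamental decomposition $\mathcal{M} = \mathcal{M}_L \circ \mathcal{K}$ of Theorem~\ref{lemma:antilinear} and the equivalence ``$\mathcal{M}$ antilinearly CP $\iff$ $\mathcal{M}_L$ (linearly) CP'' of Theorem~\ref{theorem:CPTP}(2), and then invoke the classical Kraus/Choi/Stinespring dictionary for the \emph{linear} map $\mathcal{M}_L$. What makes this painless is that each representation of $\mathcal{M}$ introduced above is manufactured directly from the corresponding representation of $\mathcal{M}_L$: the operators $\{A_j\},\{B_j\}$ in the Kraus form Eq.~(\ref{eq:kraus}) are exactly a Kraus family for $\mathcal{M}_L$; the Choi operator satisfies $J(\mathcal{M}) = (\mathcal{M}_L\otimes\mathcal{I})(|\Omega\rangle\langle\Omega|) = J(\mathcal{M}_L)$ by definition; and the operators $U,V$ in the Stinespring form Eq.~(\ref{eq:stinespring}) are Stinespring operators for $\mathcal{M}_L$. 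I would also flag at the outset that items (2) and (4) must be read existentially --- ``there is a Kraus (resp.\ Stinespring) representation of $\mathcal{M}$ with $A_j=B_j$ (resp.\ $U=V$)'' --- since neither representation is unique and a generic gauge choice breaks the coincidence.

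Concretely, I would show that each of (2), (3), (4) is equivalent to ``$\mathcal{M}_L$ is CP'', which by Theorem~\ref{theorem:CPTP}(2) is (1); equivalently one closes the cycle $(1)\Rightarrow(2)\Rightarrow(3)\Rightarrow(4)\Rightarrow(1)$. For $(1)\Leftrightarrow(2)$: if $\mathcal{M}_L$ is CP, Choi's theorem supplies a single Kraus family with $\mathcal{M}_L(\sigma)=\sum_j A_j\sigma A_j^\dagger$, whence $\mathcal{M}(\rho)=\mathcal{M}_L(\rho^*)=\sum_j A_j\rho^* A_j^\dagger$; conversely this form makes $\mathcal{M}_L(\sigma)=\sum_j A_j\sigma A_j^\dagger$ manifestly CP (this is essentially already the content of the proof of Theorem~\ref{theorem:CPTP}(2), which I would cite rather than rederive). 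For $(2)\Leftrightarrow(3)$: by Lemma~\ref{lemma:relation}(2), $J(\mathcal{M})=\sum_j |v(A_j)\rangle\langle v(B_j)|$, so $A_j=B_j$ gives a sum of rank-one projectors, hence $J(\mathcal{M})\geq 0$; conversely, a spectral decomposition $J(\mathcal{M})=\sum_k |w_k\rangle\langle w_k|$ together with the inversion formula $\mathcal{M}(\rho)=\Tr_{\mathcal{X}}(J(\mathcal{M})(\mathds{I}_{\mathcal{Y}}\otimes\rho^\dagger))$ yields $\mathcal{M}(\rho)=\sum_k A_k\rho^* A_k^\dagger$ with $A_k=B_k:=v^{-1}(w_k)$ --- the antilinear avatar of the usual Choi-to-Kraus construction. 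For $(2)\Leftrightarrow(4)$: Lemma~\ref{lemma:relation}(3) writes $U=\sum_j A_j\otimes e_j$ and $V=\sum_j B_j\otimes e_j$ for an orthonormal basis $\{e_j\}$ of $\mathcal{Z}$, so $U=V$ iff $A_j=B_j$ for all $j$ (project onto $\langle e_j|$ in the $\mathcal{Z}$ factor).

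I do not expect a genuine obstacle: the substantive input --- that complete positivity is equivalent to positivity of the Choi operator, and the attendant equivalence of representations --- is Choi's theorem for linear maps, which may simply be quoted, and the reduction to the linear case is exactly Theorem~\ref{theorem:CPTP}. The only points needing a moment's care are (i) the existential reading of (2) and (4) noted above, and (ii) bookkeeping the conjugation $\mathcal{K}$ and the vector map $v$ in the $(2)\Leftrightarrow(3)$ step, where an operator identity ($\rho\mapsto\rho^*$) is being converted into a statement about the vectors $v(A_j)$; the paper's convention of defining $J(\mathcal{M})$ through $\mathcal{M}_L$ rather than through $\mathcal{M}$ directly is precisely what keeps that step clean.
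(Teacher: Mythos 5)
Your proposal is correct and follows exactly the route the paper intends: the paper gives no detailed argument, remarking only that the lemma follows ``straightforwardly'' from the representation dictionary of Lemma~\ref{lemma:relation} together with the reduction to the linear case in Theorem~\ref{theorem:CPTP}, which is precisely what you carry out. Your added caveat that items (2) and (4) must be read existentially (since Kraus and Stinespring representations are non-unique) is a worthwhile clarification the paper leaves implicit.
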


\begin{lemma}\label{lemma:TP}
	For $\mathcal{M} \in \mathbf{B}^{(2)}_{\mathrm{anti}} (\mathcal{X},\mathcal{Y} )$, the following statements are equivalent:
	\begin{enumerate}
		\item $\mathcal{M}$ is antilinearly TP superoperator.
		\item  The Kraus representation in Eq.~(\ref{eq:kraus}) satisfies 
		\begin{equation}
			\sum_j A_j^{\dagger} B_j=\mathds{I}_{\mathcal{X}}.
		\end{equation}
		\item The Choi-Jamio{\l}kowski representation $J(\mathcal{M})$ satisfies
		\begin{equation}
			\operatorname{Tr}_{\mathcal{Y}} J(\mathcal{M}) =\mathds{I}_{\mathcal{X}}.
		\end{equation}
		\item For the Stinespring representation as in Eq.~(\ref{eq:stinespring}), 
		the operators $U,V$ satisfy $U^{\dagger} V=\mathds{I}_{\mathcal{X}}$. 
	\end{enumerate}
\end{lemma}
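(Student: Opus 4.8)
The plan is to reduce every representation-level condition to the single linear statement ``$\mathcal{M}_L$ is trace-preserving,'' exploiting the fundamental decomposition $\mathcal{M}=\mathcal{M}_L\circ\mathcal{K}$ of Theorem \ref{lemma:antilinear} together with the translation dictionary of Lemma \ref{lemma:relation}. By Theorem \ref{theorem:CPTP}(1), item~1 of the lemma is equivalent to $\mathcal{M}_L$ being TP, so it suffices to prove that each of items 2, 3, and 4 is also equivalent to $\mathcal{M}_L$ being TP; the four conditions are then pairwise equivalent.

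I would handle item~2 first. By Eq.~(\ref{eq:kraus}), $\{A_j\}$ and $\{B_j\}$ are Kraus operators of $\mathcal{M}_L$, that is $\mathcal{M}_L(\rho)=\sum_j A_j\rho B_j^{\dagger}$, so $\Tr(\mathcal{M}_L(\rho))=\Tr\big((\sum_j B_j^{\dagger}A_j)\rho\big)$ for all $\rho$. Hence $\mathcal{M}_L$ is TP iff $\sum_j B_j^{\dagger}A_j=\one_{\mathcal{X}}$, which, upon taking Hermitian adjoints and using $\one^{\dagger}=\one$, is the same as $\sum_j A_j^{\dagger}B_j=\one_{\mathcal{X}}$, i.e.\ item~2. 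For item~4, Lemma \ref{lemma:relation}(3) gives $U=\sum_j A_j\otimes e_j$, $V=\sum_j B_j\otimes e_j$ with $\{e_j\}$ orthonormal, so $U^{\dagger}V=\sum_{j,k}(A_j^{\dagger}B_k)\langle e_j|e_k\rangle=\sum_j A_j^{\dagger}B_j$; thus $U^{\dagger}V=\one_{\mathcal{X}}$ is literally item~2. For item~3, note that $J(\mathcal{M})=(\mathcal{M}_L\otimes\mathcal{I}_{\mathcal{X}})(|\Omega\rangle\langle\Omega|)$ is by definition the Choi--Jamio{\l}kowski operator of the \emph{linear} map $\mathcal{M}_L$, so one may invoke the standard fact that a linear map $\Phi$ is TP if and only if $\Tr_{\mathrm{out}}J(\Phi)=\one$; alternatively, using $J(\mathcal{M})=\sum_j v(A_j)v(B_j)^{\dagger}$ from Lemma \ref{lemma:relation}(2) and $v(A)=(A\otimes\one)|\Omega\rangle$, a short computation gives $\Tr_{\mathcal{Y}}J(\mathcal{M})=\big(\sum_j B_j^{\dagger}A_j\big)^{T}$, whence $\Tr_{\mathcal{Y}}J(\mathcal{M})=\one_{\mathcal{X}}$ is again equivalent to $\sum_j B_j^{\dagger}A_j=\one_{\mathcal{X}}$, hence to item~2. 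Chaining these equivalences closes the proof.

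Everything here is a routine unpacking of definitions and of the representations built in Lemma \ref{lemma:relation}. The only point that requires a little care is the partial-trace computation for item~3: one must keep track of the transpose produced by the vectorization map $v$ and by tracing out the first tensor factor. Since every target identity has the form $(\cdot)=\one_{\mathcal{X}}$, these transposition (and conjugation) factors are harmless, so I anticipate no real obstacle.
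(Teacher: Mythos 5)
Your proof is correct and follows exactly the route the paper indicates: the paper omits the details, stating only that the lemma follows ``straightforwardly'' from the translation dictionary of Lemma \ref{lemma:relation}, and your argument is precisely that reduction, channelled through Theorem \ref{theorem:CPTP}(1) to the single condition that $\mathcal{M}_L$ is TP. The computations (including the transpose $\Tr_{\mathcal{Y}}J(\mathcal{M})=\bigl(\sum_j B_j^{\dagger}A_j\bigr)^{T}$, which is harmless since the target is $\mathds{I}_{\mathcal{X}}$) all check out.
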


Using the relations between different representations of $\mathcal{M}$ presented in lemma~\ref{lemma:relation}, the proofs of the above two lemmas are straightforward. Combining the lemma~\ref{lemma:CP} and lemma~\ref{lemma:TP}, we obtain a complete characterization of antilinear quantum channels. 

From the Stinespring representation of the antilinear quantum channel, every antilinear channel can be implemented by performing a joint antiunitary transformation on the system and ancilla and then tracing over the ancilla.

\begin{example}
    Let us now consider some examples of antilinear quantum channels.
    \begin{enumerate}
        \item Antilinear bit flip channel.  The Kraus operators are $A_0=\sqrt{p}\,\one$, $A_1=\sqrt{1-p}\,\sigma_x$ with $p\in [0,1]$, the channel is of the form
        \begin{equation}\label{eq:bitFlip}
            \mathcal{E}(\rho)=\sum_i A_i \mathcal{K}(\rho) A_i^{\dagger},
        \end{equation}
        where $\rho=(\one+\vec{r}\cdot \vec{\sigma} )/2$ with $\vec{r}=(r_x,r_y,r_z)$ the Bloch vector, and $\mathcal{K}$ is the complex conjugation channel.
        The output state $\omega=\mathcal{E}(\rho)$ has a Bloch vector
        \begin{equation}
            \vec{s}= (r_x, (1-2p)r_y, (2p-1)r_z).
        \end{equation}
        This means that states on the $x$-axis are left alone, but states on the $yz$-plane are contracted.
        
        \item Antilinear phase flip channel. The Kraus operators are $A_0=\sqrt{p}\,\one$, $A_1=\sqrt{1-p}\,\sigma_z$ with $p\in [0,1]$, the channel takes a similar form as Eq.~\eqref{eq:bitFlip}. The Bloch vector for the output state reads
        \begin{equation}
            \vec{s}=((2p-1)r_x,(1-2p)r_y,r_z).
        \end{equation}
        Thus states on the $z$-axis are left alone, but states on the $xy$-plane are contracted.
        
         \item Antilinear bit-phase flip channel.  The Kraus operators are $A_0=\sqrt{p}\,\one$, $A_1=\sqrt{1-p}\,\sigma_y$ with $p\in [0,1]$. The Bloch vector for the output state is
         \begin{equation}
             \vec{s}=((2p-1)r_x,-r_y,(2p-1)r_z).
         \end{equation}
         Unlike the linear bit-phase flip channel, the antilinear bit-phase flip channel also flips the states in the $y$-axis, and the states on the $xz$-plane are contracted.
         \item Antilinear depolarizing channel. For input qubit state $\rho$, the channel is defined as
         \begin{equation}
             \mathcal{E}(\rho)=p\,\frac{\one}{2}+ (1-p)\mathcal{K}(\rho).
         \end{equation}
         The corresponding Kraus operators are
         \begin{equation}
             A_0=\sqrt{\frac{4-3p}{4}}\,\one, A_1=\sqrt{\frac{p}{2}}\,\sigma_x, A_2=\sqrt{\frac{p}{2}}\,\sigma_y, A_3=\sqrt{\frac{p}{2}}\,\sigma_z,
         \end{equation}
         the channel takes a similar form as Eq.~\eqref{eq:bitFlip}.
         The Bloch vector for the output state is
          \begin{equation}
             \vec{s}=((1-p)r_x,-(1-p)r_y,(1-p)r_z).
         \end{equation}
       There is a reflection of the Bloch vector along the $y$-axis before contraction.
         \item Antilinear amplitude damping channel. The Kraus operators are
         \begin{equation}
             A_1=\left(\begin{matrix}
                 1 & 0\\
                 0& \sqrt{1-\gamma}
             \end{matrix}\right), \quad A_2 =\left
(\begin{matrix}
                 0& \sqrt{\gamma}\\
                 0&0
             \end{matrix}\right).
         \end{equation}
         The Bloch vector for the output state is
                 \begin{equation}
             \vec{s}=(\sqrt{1-\gamma}r_x,-\sqrt{1-\gamma}r_y,\gamma+(1-\gamma)r_z).
         \end{equation} 
        \item Antilinear phase damping channel. The Kraus operators are
         \begin{equation}
             A_1=\left(\begin{matrix}
                 1 & 0\\
                 0& \sqrt{1-\gamma}
             \end{matrix}\right), \quad A_2 =\left
(\begin{matrix}
                 0& 0 \\
                 0&\sqrt{\gamma}
             \end{matrix}\right).
         \end{equation}
          The Bloch vector for the output state is
                 \begin{equation}
             \vec{s}=(\sqrt{1-\gamma}r_x,-\sqrt{1-\gamma}r_y,r_z).
         \end{equation} 
        Notice that this is in the same form as the antilinear phase flip channel if we set $2p-1=\sqrt{1-\gamma}$. This equivalence arises because the antilinear phase damping channel can be transformed into a phase flip channel by this substitution.
    \end{enumerate}
\end{example}

\subsection{Antilinearly unitary and unital superoperator}
\label{sec:unital}
The unital superoperators have broad applications in quantum information theory \cite{mendl2009unital,watrous2018theory}.  
The unital channels are also called doubly stochastic quantum channels.
In this subsection, we will study their antilinear counterparts.

\begin{definition}
	Let $\mathcal{M}\in \mathbf{B}_{\mathrm{anti}}^{(2)} (\mathcal{X}, \mathcal{Y})$ be an antilinear superoperator.
	It is called antilinearly unital (stochastic) if 
	$\mathcal{M} ( \mathds{I}_{\mathcal{X}}) =\mathds{I}_{ \mathcal{Y} }$. 
	An antilinearly unital and TP superoperator is called antilinearly doubly stochastic superoperator.
\end{definition}

\begin{definition}\label{def:antiunitary}
	Let $\mathcal{M}\in \mathbf{B}_{\mathrm{anti}}^{(2)} (\mathcal{X}, \mathcal{Y})$ be an antilinear superoperator.
	It is called antilinearly weak unitary (or weak antiunitary) if $\mathcal{M}^{\ddagger}=\mathcal{M}^{-1}$; it is called antilinearly (strong) unitary if there exist unitary operators $U,V$ such that $\mathcal{M}(\rho)=U\rho^* V^{\dagger}$.
\end{definition}
For linear superoperators, we have similar definitions for strong and weak unitarity. 
In quantum information literature, the unitary superoperator is what we call strong unitary. Hereinafter, we will call strong (anti)unitary superoperator simply (anti)unitary superoperator whenever there is no ambiguity. A strong (anti)unitary superoperator is always a weak (anti)unitary superoperator, but the reverse is not true.
The natural representation of a strong (anti)unitary superoperator always has a tensor product structure $N(\mathcal{M})_L=U\otimes V^*$. But for general weak unitary superoperator, there is no such a structure.

For maps between $\mathbf{B}(\mathcal{X})$ and $\mathbf{B} (\mathcal{Y})$, we consider those that preserve the norm of the inner product, i.e.,
\begin{equation}
	|\langle \mathcal{M} (\sigma), \mathcal{M}(\rho) \rangle|=|\langle \sigma, \rho\rangle|.
\end{equation}
Inspired by the Wigner theorem, we have only two possible classes of such maps: 
(i) for linear case $\mathcal{M}^{\dagger}=\mathcal{M}^{-1}$;
(ii) for antilinear case $\mathcal{M}^{\ddag} =\mathcal{M}^{-1}$.

\begin{theorem}[Wigner]
    The superoperator transformations between $\mathbf{B}(\mathcal{X})$ and $\mathbf{B} (\mathcal{Y})$ that preserve the norm induced by the Hilbert-Schmidt inner product can only be unitary or antiunitary.
\end{theorem}

The proof is straightforward using the natural representation of linear and antilinear superoperators.

\begin{theorem}
	$\mathcal{M}$ is antilinearly unital if and only if $\mathcal{M}_L$ is unital;
	$\mathcal{M}$ is antiunitary if and only if $\mathcal{M}_L$ is unitary.
\end{theorem}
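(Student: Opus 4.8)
The plan is to reduce both claims to the fundamental decomposition $\mathcal{M} = \mathcal{M}_L \circ \mathcal{K}$ of Theorem~\ref{lemma:antilinear}, and then exploit the fact that the complex conjugation superoperator $\mathcal{K}$ fixes the identity, i.e. $\mathcal{K}(\mathds{I}_{\mathcal{X}}) = \mathds{I}_{\mathcal{X}}^* = \mathds{I}_{\mathcal{X}}$. For the first statement, I would simply compute $\mathcal{M}(\mathds{I}_{\mathcal{X}}) = \mathcal{M}_L(\mathcal{K}(\mathds{I}_{\mathcal{X}})) = \mathcal{M}_L(\mathds{I}_{\mathcal{X}})$; hence $\mathcal{M}(\mathds{I}_{\mathcal{X}}) = \mathds{I}_{\mathcal{Y}}$ if and only if $\mathcal{M}_L(\mathds{I}_{\mathcal{X}}) = \mathds{I}_{\mathcal{Y}}$, which is exactly the statement that $\mathcal{M}$ is antilinearly unital iff $\mathcal{M}_L$ is unital. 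This direction is essentially a one-line verification once the definitions are unwound.

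For the second statement I would argue in two steps. For sufficiency, if $\mathcal{M}_L$ is unitary then in particular $\mathcal{M}_L(\rho) = U\rho U^\dagger$ for a unitary $U$ (using the standard fact that a unitary linear superoperator on $\mathbf{B}(\mathcal{X})$ is conjugation by a unitary operator — already the content behind the antiunitary discussion in the previous subsection), so $\mathcal{M}(\rho) = \mathcal{M}_L(\mathcal{K}(\rho)) = U\rho^* U^\dagger$, which is precisely the definition of $\mathcal{M}$ being antiunitary. For necessity, if $\mathcal{M}$ is antiunitary then by definition $\mathcal{M}(\rho) = U\rho^* U^\dagger$ for some unitary $U$; comparing with $\mathcal{M}(\rho) = \mathcal{M}_L(\rho^*)$ for all $\rho$, and noting that $\rho \mapsto \rho^*$ is a bijection of $\mathbf{B}(\mathcal{X})$, we get $\mathcal{M}_L(\sigma) = U\sigma U^\dagger$ for all $\sigma$, so $\mathcal{M}_L$ is unitary. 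Alternatively, and perhaps more cleanly, I would phrase ``antiunitary'' via Theorem~\ref{theorem:CPTP} and Lemma~\ref{lemma:CP}: an antiunitary superoperator is an antilinear channel (so $\mathcal{M}_L$ is a channel) that is additionally unital with a single Kraus operator, which by Lemma~\ref{lemma:CP}(2) and the TP condition $A^\dagger A = \mathds{I}$ together with unitality $A A^\dagger = \mathds{I}$ forces $A = U$ unitary, hence $\mathcal{M}_L$ unitary; the converse is immediate.

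The only genuine subtlety — and the step I would be most careful about — is justifying that a linear superoperator $\mathcal{M}_L$ with $\mathcal{M}_L(\rho) = U\rho^* \text{(something)}$... more precisely, that the notion of ``$\mathcal{M}_L$ unitary'' used in the statement really does correspond to conjugation $\rho \mapsto U\rho U^\dagger$ by a unitary operator, rather than to an arbitrary superoperator that is an isometry for the Hilbert--Schmidt inner product. These two notions differ in general: there are Hilbert--Schmidt-isometric superoperators that are not of conjugation form. I would resolve this by adopting (and stating) the convention, consistent with the immediately preceding paragraph defining ``antiunitary'' via $\mathcal{M}(\rho) = U\rho^* U^\dagger$, that ``$\mathcal{M}_L$ unitary'' here means $\mathcal{M}_L$ is conjugation by a unitary; with that reading the equivalence is the symmetric counterpart of the antiunitary case and the proof is the short bijection argument above. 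I would also remark that both equivalences are special cases of a general principle: a property $P$ of linear superoperators that is invariant under precomposition with $\mathcal{K}$ (such as unitality, since $\mathcal{K}$ fixes $\mathds{I}$, or being conjugation by a unitary, since $\mathcal{K}$ itself intertwines conjugations) transfers verbatim to the antilinear setting through the fundamental decomposition.
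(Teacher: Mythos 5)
Your proof is correct and follows the route the paper clearly intends: the paper states this theorem without any proof, and the only natural argument is exactly yours, reading both claims off the fundamental decomposition $\mathcal{M} = \mathcal{M}_L \circ \mathcal{K}$ together with $\mathcal{K}(\mathds{I}) = \mathds{I}$ and the bijectivity of $\rho \mapsto \rho^*$. Your observation that ``$\mathcal{M}_L$ unitary'' must be read as conjugation by a unitary operator (matching the paper's definition of antiunitary via $\mathcal{M}(\rho) = U\rho^* U^{\dagger}$) rather than as a mere Hilbert--Schmidt isometry is a genuine subtlety the paper leaves implicit, and the convention you adopt is the right one.
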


Antiunitary quantum channels are automatically unital. 
The antilinearly unital superoperators are closely related to antilinearly TP superoperators.
\begin{theorem} \label{thm:TPC}
	$\mathcal{M}$ is antilinearly TP if and only if $\mathcal{M}_L^{\dagger}$ is unital.
\end{theorem}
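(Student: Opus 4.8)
The plan is to reduce the claim, via the structure theorems already proved, to the well-known duality between trace preservation and unitality for \emph{linear} superoperators, and then to establish that duality directly from the Hilbert-Schmidt pairing. First I would invoke Theorem \ref{lemma:antilinear} to fix the fundamental decomposition $\mathcal{M}=\mathcal{M}_L\circ\mathcal{K}$, together with Theorem \ref{theorem:CPTP}(1), which says $\mathcal{M}$ is antilinearly TP if and only if the linearization $\mathcal{M}_L$ is TP. Thus it suffices to prove the purely linear statement: $\mathcal{M}_L$ is TP $\iff$ $\mathcal{M}_L^{\dagger}$ is unital.

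For that equivalence I would write the trace as a Hilbert-Schmidt inner product against the identity: for every $\rho\in\mathbf{B}(\mathcal{X})$ one has $\Tr(\rho)=\langle\mathds{I}_{\mathcal{X}},\rho\rangle_{\mathbf{B}(\mathcal{X})}$ and $\Tr(\mathcal{M}_L(\rho))=\langle\mathds{I}_{\mathcal{Y}},\mathcal{M}_L(\rho)\rangle_{\mathbf{B}(\mathcal{Y})}=\langle\mathcal{M}_L^{\dagger}(\mathds{I}_{\mathcal{Y}}),\rho\rangle_{\mathbf{B}(\mathcal{X})}$, where the last equality is just the defining property of the adjoint $\mathcal{M}_L^{\dagger}$. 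Trace preservation asserts that these two scalars coincide for all $\rho$; since the Hilbert-Schmidt inner product is nondegenerate on $\mathbf{B}(\mathcal{X})$, this happens exactly when $\mathcal{M}_L^{\dagger}(\mathds{I}_{\mathcal{Y}})=\mathds{I}_{\mathcal{X}}$, i.e. when $\mathcal{M}_L^{\dagger}$ is unital. Reading the chain of identities in either direction yields both implications at once, so no separate converse argument is needed.

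As an alternative route (and a consistency check) I would give the Kraus-operator version: by Eq. (\ref{eq:kraus}) the linearization is $\mathcal{M}_L(\rho)=\sum_j A_j\rho B_j^{\dagger}$, whose adjoint is $\mathcal{M}_L^{\dagger}(\sigma)=\sum_j A_j^{\dagger}\sigma B_j$, so $\mathcal{M}_L^{\dagger}(\mathds{I}_{\mathcal{Y}})=\sum_j A_j^{\dagger}B_j$, which by Lemma \ref{lemma:TP} equals $\mathds{I}_{\mathcal{X}}$ precisely when $\mathcal{M}$ is antilinearly TP. I do not expect a genuine obstacle here; the only points needing a little care are keeping the domain and codomain (hence $\mathds{I}_{\mathcal{X}}$ versus $\mathds{I}_{\mathcal{Y}}$) straight, and not conflating the linear adjoint $\mathcal{M}_L^{\dagger}$ of the linearization with the adjoints $\mathcal{M}^{\dagger}$ or $\mathcal{M}^{\ddag}$ of the antilinear map itself. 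The complex conjugation $\mathcal{K}$ plays no further role, since it is both trace-preserving and unital and appears only through the fundamental decomposition.
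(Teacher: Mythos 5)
Your proposal is correct and follows essentially the same route as the paper: reduce via Theorem \ref{theorem:CPTP}(1) to the statement that $\mathcal{M}_L$ is TP iff $\mathcal{M}_L^{\dagger}$ is unital, which the paper simply cites as a known equivalence. Your only addition is to spell out that standard duality explicitly through the Hilbert--Schmidt pairing (and the Kraus-operator cross-check), which the paper omits.
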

\begin{proof}
	From theorem~\ref{theorem:CPTP}, $\mathcal{M}$ is antilinearly TP if and only if $\mathcal{M}_L$ is TP, and this is further equivalent to that $\mathcal{M}_L^{\dagger}$ is unital.
\end{proof}

It is clear that the antiunitary superoperator is antilinearly unital. We can also introduce the mixture of antiunitaries 
\begin{equation}
	\mathcal{U} (\rho) =\sum_j p(j) U_j \rho^* U_j^{\dagger},
\end{equation}
where $p(j)$ is a probability distribution and $U_j$'s are a collection of unitary operators. 
Antilinearly Weyl-covariant channel is also a crucial example of unital superoperator,
\begin{equation}
	\mathcal{W}(\rho) =\sum_{i,j\in \mathbb{Z}_N} p(i,j) W_{ij} \rho^* W_{ij}^{\dagger},
\end{equation}
where $W_{ij}=X^{i}_N Z^j_N$ and $X_N,Z_N$ are generalized Pauli operators. It can be proved that an antilinearly Weyl-covariant channel is a mixed antilinearly unitary channel. Another crucial class of antilinearly unital superoperators is generalized $\Theta$-conjugation, which is useful for us to investigate the geometric properties of higher-dimensional quantum systems.

\begin{figure}[t]
	\centering
	\includegraphics[width=8cm]{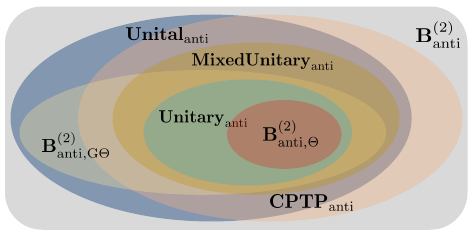}\\
	\caption{A depiction of relations between different classes of antilinear superoperators. Here $\mathbf{B}^{(2)}_{\mathrm{anti}}$ represents the set of all antilinear superoperators;
		$\mathbf{CPTP}_{\mathrm{anti}}$ represents the set of all antilinearly CPTP superoperators; $\mathbf{Unital}_{\mathrm{anti}}$ represents the set of all antilinearly unital superoperators; $\mathbf{MixedUnitary}_{\mathrm{anti}}$ represents the set of all antilinearly mixed-unitary superoperators;
		$\mathbf{Unitary}_{\mathrm{anti}}$ represents the set of all antiunitary superoperators;
		$\mathbf{B}^{(2)}_{\mathrm{anti},\mathrm{G}\Theta}$ represents the set of all generalized $\Theta$-conjugations;
		$\mathbf{B}^{(2)}_{\mathrm{anti},\Theta}$ represents the set of all $\Theta$-conjugations.
	}\label{fig:relation}
\end{figure}

\subsection{Generalized $\Theta$-conjugation}

We now introduce the notion of generalized $\Theta$-conjugation. This is a useful definition for investigating quantum fidelity, quantum concurrence, and quantum geometric invariance.
Antilinear superoperator $\mathcal{M}$ is called involution if $\mathcal{M}^2=\mathcal{I}$, or skew involution if $\mathcal{M}^2=-\mathcal{I}$.

\begin{definition}
	An antilinear superoperator $\Theta$ is called a generalized $\Theta$-conjugation if $\Theta$ is (i) weak unitary $\Theta^{\ddag}=\Theta^{-1}$,
	(ii) Hermitian $\Theta^{\ddag}=\Theta$,
	(iii) involution $\Theta^{-1}=\Theta$,
	and (iv) unital $\Theta(\one)=\one$.
	Similarly, $\Theta$ is called a generalized skew $\Theta$-conjugation if it is (i) weak unitary $\Theta^{\ddag}=\Theta^{-1}$,
	(ii) skew Hermitian $\Theta^{\ddag}=-\Theta$,
	(iii) skew involution $\Theta^{-1}=-\Theta$,
	and (iv) unital $\Theta(\one)=\one$.
\end{definition}

The relations between different classes of antilinear superoperators are shown in Fig.~\ref{fig:relation}.

The $\Theta$-conjugation \cite{Uhlmann2000fidelity}, by definition, is a (strong) unitary involution, making it a special case of generalized $\Theta$-conjugation. Typical examples include the time-reversal operation \cite{wigner2012group}, and Hill-Wootters conjugation \cite{Hill1997entanglement}.
Using the polar decomposition of antilinear involution, we will obtain several crucial criteria for determining whether a given involution is a  generalized $\Theta$-conjugation or not. The detailed discussion is given in Appendix \ref{app:AntiSup}.

The $\Theta$-conjugated quantum fidelity and quantum concurrence play key roles in studying quantum information and quantum correlations. Here, we can also introduce their counterparts for generalized $\Theta$-conjugation. 

Fidelity measures how close two states are. For completeness, we can introduce the $p$-norm fidelity between quantum states $\rho$ and $\sigma$,
\begin{equation}
	F_p(\rho,\sigma)=\| \sqrt{\rho} \sqrt{\sigma}\|_p.
\end{equation}
Here the $p$-norm of $A=\sqrt{\rho}\sqrt{\sigma}$ is defined by $\|A\|_p=(\Tr((A^\dagger A)^{p/2}))^{1/p}=(\Tr(\sqrt{\sqrt{\sigma}\rho\sqrt{\sigma}})^{p})^{1/p}$. When taking $p=1$, we obtain the normal fidelity $F(\rho,\sigma)=\Tr (\sqrt{\sqrt{\sigma} \rho \sqrt{\sigma} })$. 
Many properties of fidelity can be generalized to $p$-norm fidelity; see Appendix \ref{App:FidelityP} for details.

Following the definition of $\Theta$-fidelity \cite{Uhlmann2000fidelity}, we introduce the definition below:

\begin{definition}
	For a given generalized $\Theta$-conjugation, the $p$-norm generalized $\Theta$-fidelity between two states is defined as
	\begin{equation}
		F_{\Theta,p}(\rho,\sigma):=F_p(\rho, \Theta(\sigma))=\| \sqrt{\rho} \sqrt{\Theta(\sigma)}\|_p.
	\end{equation}
	By abbreviating $\tilde{\rho}:=\Theta (\rho)$, the $p$-norm generalized $\Theta$-fidelity of $\rho$ is defined as
	\begin{equation}
		F_{\Theta,p}(\rho):=F_{p}(\rho, \tilde{\rho}).
	\end{equation}
	The most commonly used one is the $1$-norm case.
\end{definition}
Notice that generalized $\Theta$-conjugation may not be an antilinear channel; therefore, it may map a positive semidefinite operator to an operator that is not positive semidefinite.  To remedy this problem, we can shrink the Bloch vector corresponding to $\Theta(\rho)$ such that it becomes a vector in the Bloch body (see Sec.~\ref{sec:geo} for a detailed discussion). 


Similar to fidelity, we can also introduce the generalized $\Theta$-concurrence. Consider the operator $\sqrt{\rho}\sqrt{\sigma}$, whose singular values are nonnegative numbers: $\lambda_1\geq \lambda_2\geq \cdots \geq \lambda_d$. Consider the $p$-th power of these singular values, {\it viz.}, the spectrum of $(\sqrt{\rho} \sigma \sqrt{\rho} )^{p/2}$; the concurrence of order $p$ between $\rho$ and $\sigma$ is defined as
\begin{equation}
	C_p(\rho,\sigma)=\max \left\{0,\left(\lambda_1^p-\sum_{j=2}^d \lambda_j^p\right)^{1/p}\right\}.
\end{equation}
When taking $p=1$, we obtain the normal definition of concurrence between $\rho$ and $\sigma$, $	C(\rho,\sigma)=\max \{0,\lambda_1-\sum_{j=2}^d \lambda_j\}$.
Following the definition of $\Theta$-concurrence \cite{Uhlmann2000fidelity}, we introduce the definition below:
\begin{definition}
	For a fixed generalized $\Theta$-conjugation, consider the spectrum of the operator 
	$(\sqrt{\rho} \Theta(\sigma) \sqrt{\rho} )^{p/2}$. By ordering its eigenvalues in a descending order $\lambda_1^p \geq \lambda_2^p\geq \cdots \geq \lambda_d^p$, 
	the generalized $\Theta$-concurrence of order $p$ between $\rho$ and $\sigma$ is defined as
	\begin{equation}
		C_{\Theta,p}(\rho,\sigma)=\max \left\{0,\left(\lambda_1^p-\sum_{j=2}^d \lambda_j^p\right)^{1/p}\right\}.
	\end{equation}
	By abbreviating $\tilde{\rho}:=\Theta (\rho)$, the generalized $\Theta$-concurrence of order $p$ for $\rho$ is defined as $C_{\Theta,p} (\rho)=C_{\Theta,p} (\rho,\tilde{\rho})$.
\end{definition}
As we will see later, for qubit systems, quantum concurrence is related to the Bloch vectors in a simple way.
Similar to generalized $\Theta$-conjugated fidelity, there is also a problem with positivity, which will be discussed in the next section.

From theorem~\ref{thm:TPC} in Sec.~\ref{sec:unital}, the generalized $\Theta$-conjugation is an antilinear TP map. The generalized $\Theta$- fidelity and concurrence satisfy the following properties:
\begin{enumerate}
    \item Suppose we have the Kraus representation for generalized $\Theta$-conjugation $\Theta(\rho)=\sum_j A_j \rho^* B_j$. For unitary $U$, define $\Theta'$ as 
$\Theta'(\rho)=\sum_j A'_j \rho^* {B_j'}^{\dagger}$ with $A'_j=U^{\dagger}A_j U$ and ${B_j'}^{\dagger}=UB_j^{\dagger} U^{\dagger}$. Then we have $F_{\Theta,p}(\rho)=F_{\Theta',p}(\rho)$ and $C_{\Theta,p}(\rho)=C_{\Theta',p}(\rho)$.

\item Both generalized $\Theta$-fidelity and $\Theta$-concurrence are bounded $0\leq F_{\Theta,p}, C_{\Theta,p}\leq 1$ after shrinking the Bloch vectors for generalized $\Theta$-conjugation to ensure they are positive.

\item They are homogeneous. Thus for non-negative real number $\lambda$, we have $F_{\Theta,p}(\lambda\rho)=\lambda F_{\Theta,p}(\rho)$ and $C_{\Theta,p}(\lambda \rho)=C_{\Theta,p}(\lambda \rho)$.

\item From the concavity of 1-fidelity and antilinearity of $\Theta$, we obtain $F_{\Theta,1}(\sum_i p_i \rho_i)\geq \sum_i p_i F_{\Theta,1}(\rho_i)$.

\end{enumerate}

Notice that when the Kraus rank of a generalized $\Theta$-conjugation is one, the generalized $\Theta$-conjugation reduces to a $\Theta$-conjugation. In this case, Uhlmann proved that $\Theta$-concurrence and $\Theta$-fidelity are related to each other~\cite{Uhlmann2000fidelity}:
$\Theta$-concurrence is the largest convex function on the state space that coincides with $|\langle \psi|\Theta |\psi\rangle|$ on pure states, while $\Theta$-fidelity is the smallest concave function on the state space that coincides with $|\langle \psi|\Theta |\psi\rangle|$ on pure states.
For two-qubit cases, $\Theta$-concurrence being equal to zero is a necessary and sufficient condition for the state to be separable, while $\Theta$-fidelity being equal to one is a necessary and sufficient condition for the state to be in the convex hull of the maximally entangled pure states.
Similar to the qubit case, we propose generalized $\Theta$-concurrence and generalized $\Theta$-fidelity based on generalized $\Theta$-conjugation. Since generalized $\Theta$-conjugation applies to arbitrary local quantum dimensions, not just qubits, we conjecture that these two concepts play the same role as their qubit counterparts.
This will be left for future study.

\section{Application: quantum geometric invariance for qudit system}
\label{sec:geo}

With the above preparation, we are now in a position to discuss the quantum geometric invariance of qudit systems.
Consider the linear isomorphism $\Phi: \mathbf{H}(\mathbb{C}^d) \to \mathbb{R}^{d^2}$
given by Bloch representation (details provided below). By mapping $\sigma_{\mu}$ to $e_{\mu}=(0,\cdots,0,1,0,\cdots,0)$, each Hermitian matrix $\rho$ has a corresponding vector $x(\rho)=(x_0,\cdots,x_{d^2-1})$. 
Conversely, for every $x\in  \mathbb{R}^{d^2}$, we obtain a Hermitian matrix of the form $\rho(x)=\frac{1}{d}\sum_{\mu=0}^{d^2-1} x_{\mu} \sigma_{\mu}$.
We aim to study transformations of states corresponding to the geometric transformations of Bloch space-time vector $x_{\mu}$.

For a given  superoperator $\mathcal{E}$, which maps $\rho$ to $\mathcal{E}(\rho)=\frac{1}{d}(\sum_{\mu}x'_{\mu}\sigma_{\mu})$, there is a corresponding geometric transformation $T_{\mathcal{E}}: x_{\mu}\mapsto x'_{\mu}(x)$. For a given geometric transformation $T$, there also exists a corresponding superoperator $\mathcal{E}_T$. See Fig.~\ref{fig:bloch} for a depiction.
For these geometric transformations, like Lorentz transformation, investigating the corresponding transformations of states plays a crucial role in studying entanglement, monogamy relations, Jones vector transformation in quantum optics, and so on \cite{Han1997stokes,Han1999wigner,Verstraete2002lorentz,Jaeger2003quantum,Eltschka2015monogamy}. 

We regard $x\in \mathbb{R}^{d^2}$ as a space-time vector. By introducing different generalized $\Theta$-conjugations, different geometric structures on  $\mathbb{R}^{d^2}$ are obtained. In this different geometric space-time, different geometric transformations and their properties will be discussed.

\subsection{Bloch representation of qudit}
\label{sec:bloch}

To start with, let us first recapitulate the definition of Bloch representations of qudit states and their properties. \cite{Hioe1981N,Jakobczyk2001geometry,Kimura2003bloch,eltschka2021shape}. 
Consider a $d$-dimensional system $\mathcal{H}=\mathbb{C}^{d}$ with standard basis $\{|k\rangle|k=0,1\cdots, d-1\}$, associated operator space $\mathbf{B}(\mathcal{H})$ equipped with Hilbert-Schmidt inner product $\langle A,B\rangle_{\rm HS}=\operatorname{Tr} (A^{\dagger} B)$ is a Banach space.
The set of all Hermitian matrices $\mathbf{H}(\mathcal{H})$ forms a $d^2$-dimensional real linear subspace of  $\mathbf{B}(\mathcal{H})$. 
The set of all density operators $\mathbf{D}(\mathcal{H})$ is a convex subset of $\mathbf{H}(\mathcal{H})$ consisting of all positive semidefinite trace-one operators:
\begin{equation}
	\mathbf{D}(\mathcal{H})=\{\rho \in \mathbf{H}(\mathcal{H})\,|\, \rho \geq 0; \Tr(\rho)=1 \}.
\end{equation}
The Bloch representation is a linear map from $\mathbf{D}(\mathcal{H})$ to $\mathbb{R}^{d^2}$.

The Bloch representation is in general not unique.
To construct a Bloch representation, first we need to choose a basis of $ \mathbf{H}(\mathcal{H})$.
It is convenient to use Hilbert-Schmidt basis $\{\sigma_{\mu}\,|\,\mu=0,\cdots,d^2-1\}$ which satisfies
\begin{itemize}
	\item  The basis includes the identity operator $\sigma_0=\mathds{I}$; 
	\item  $\Tr( \sigma_{j})=0$ for all $j \geq 1$; 
	\item These matrices are orthogonal $	\Tr(\sigma_{\mu}\sigma_{\nu})=d\delta{\mu \nu}$.
\end{itemize}
For the $d=2$ case, Pauli matrices form such a basis.
For the $d=2^n$ case, tensor products of Pauli matrices form such a basis.
For the more general cases, a typical explicit matrix representation of such a basis is generalized Gell-Mann (GGM) matrices \cite{Gell-mann1962symmetry} which consists of 
\begin{itemize}
	\item[(1)] $\frac{d(d-1)}{2}$ symmetric GGMs which correspond to Pauli X-matrix
	\begin{equation}
		\Lambda_{s}^{j k}=\sqrt{\frac{d}{2}}\left(|j\rangle\langle k|+| k\rangle\langle j|\right), \quad 0 \leq j<k \leq d-1;
	\end{equation}
	\item[(2)] $\frac{d(d-1)}{2}$ antisymmetric GGMs which correspond to Pauli Y-matrix
	\begin{equation}
		\Lambda_{a}^{j k}=\sqrt{\frac{d}{2}}\left(-i|j\rangle\langle k|+i| k\rangle\langle j|\right), \quad 0 \leq j<k \leq d-1;
	\end{equation}
	\item[(3)]  $(d-1)$ diagonal GGMs which correspond to Pauli Z-matrix
	\begin{equation}
	\Lambda^{l}=\sqrt{\frac{d }{(l+1)(l+2)}}\left(\sum_{j=0}^{l}|j\rangle\langle j|-(l+1)| l+1\rangle\langle l+1|\right),\quad 0 \leq l \leq d-2;
	\end{equation}
	\item[(4)]  The identity matrix $\mathds{I}$.
\end{itemize}
There are in total $\frac{d(d-1)}{2}+\frac{d(d-1)}{2}+(d-1)+1=d^2$ matrices. Notice that GGM matrices are generators of $\mathfrak{su}(d)$ and also serve as a basis of complex vector space $\mathbf{B}(\mathcal{H})$.

Since $\mathbf{H}(\mathcal{H}) \simeq \mathbb{R}^{d^2}$, a density operator $\rho$ can be uniquely represented in Hilbert-Schmidt basis as 
\begin{equation}\label{eq:Bloch}
	\rho=\frac{1}{d}\sum_{\mu=0}^{d^2-1}x_{\mu}\sigma_{\mu}=\frac{1}{d}(\sigma_0+\vec{x}\cdot \vec{\sigma}),
\end{equation}
where $x_{\mu}\in \mathbb{R}$ and $x_0=1$ since all $\sigma_{\mu}$ are traceless except $\sigma_0$ and the density operator is trace-one.  The $(d^2-1)$-dimensional vector $\vec{x}$ is called a Bloch vector (or coherence vector). 
Notice that all GGM matrices are Hermitian, thus, they can be regarded as observables. One of the advantages for this kind of representation is that 
\begin{equation}
	\langle \sigma_{\mu}\rangle= \Tr (\sigma_{\mu}\rho)=x_{\mu}.
\end{equation}
By measuring the expectation value of these $d^2-1$ observables, we can determine the state \cite{Hioe1981N}.

From the condition that purity $\operatorname{Tr} (\rho^2)\leq 1$, we see that
\begin{equation}\label{eq:purity}
	\| \vec{x}\|^2\leq d-1.
\end{equation}
For pure states, $\| \vec{x}\|^2= d-1$, and for mixed states, $\| \vec{x}\|^2< d-1$.
Notice that Eq.~(\ref{eq:purity}) is not sufficient for $\rho$ in Eq.~(\ref{eq:Bloch}) to be a density operator, so the condition that $\rho \geq 0$ still needs to be imposed \cite{Hioe1981N,Jakobczyk2001geometry,Kimura2003bloch}. 
It has been shown that the angle between any two pure-state Bloch vectors $\vec{x}$ and $\vec{y}$ must satisfy \cite{Jakobczyk2001geometry}
\begin{equation}
	-\frac{1}{d-1} \leq \cos (\theta_{\vec{x},\vec{y}}) \leq 1.
\end{equation}
This implies that the set of all Bloch vectors for qudit states forms a convex subset of $(d^2-1)$-dimensional ball $B^{d^2-1}(0;\sqrt{d-1})$ with radius $\sqrt{d-1}$.
To impose the condition that $\rho\geq 0$ (namely, all eigenvalues are nonnegative), we consider the characteristic polynomial 
\begin{equation}\label{eq:poly}
	\det (\lambda \one -\rho)=\sum_{j=0}^d (-1)^j a_j \lambda^{d-j}.
\end{equation}
Using Vieta's theorem 
$a_j=\sum_{1\leq k_1<\cdots <k_j\leq d} \lambda_{k_1}\cdots \lambda_{k_j}$, it can be proved that $\rho \geq 0$ is equivalent to $a_0,\cdots,a_d \geq 0$ \cite{Kimura2003bloch}.
With this, each Bloch vector $\vec{x}$ corresponds to a set of real coefficients of the characteristic polynomial, $a_j(\vec{x})$. Thus the Bloch convex body corresponding to the set of all density matrices can be defined as 
\begin{equation}
	\mathcal{B}(d^2-1)=\{\vec{x} \in \mathbb{R}^{d^2-1}\,|\, a_j(\vec{x}) \geq 0, \forall j\}.
\end{equation}
For $d=2$, the Bloch body is exactly a ball. However the shapes are very complicated for higher-dimensional cases.

\begin{example}[3-dimensional Bloch convex body]\label{exp:3d}
	For 3-dimensional case, the 9 GGM matrices are: 
	\begin{itemize}
		\item[(1)] 3 symmetric matrices
		\begin{align}
			\sigma_1=\sqrt{\frac{3}{2}} \left(\begin{array}{lll}
				0 & 1 & 0 \\
				1 & 0 & 0 \\
				0 & 0 & 0
			\end{array}\right),  \quad
			\sigma_4=\sqrt{\frac{3}{2}} \left(\begin{array}{lll}
				0 & 0 & 1 \\
				0 & 0 & 0 \\
				1 & 0 & 0
			\end{array}\right), \quad 
			\sigma_6=\sqrt{\frac{3}{2}} \left(\begin{array}{lll}
				0 & 0 & 0 \\
				0 & 0 & 1 \\
				0 & 1 & 0
			\end{array}\right).
		\end{align}
		\item[(2)] 3 antisymmetric matrices
		\begin{align}
			\sigma_2=	\sqrt{\frac{3}{2}} \left(\begin{array}{ccc}
				0 & -i & 0 \\
				i & 0 & 0 \\
				0 & 0 & 0
			\end{array}\right), \quad
			\sigma_5=\sqrt{\frac{3}{2}} \left(\begin{array}{ccc}
				0 & 0 & -i \\
				0 & 0 & 0 \\
				i & 0 & 0
			\end{array}\right), \quad
			\sigma_7=\sqrt{\frac{3}{2}} \left(\begin{array}{ccc}
				0 & 0 & 0 \\
				0 & 0 & -i \\
				0 & i & 0
			\end{array}\right).
		\end{align}
		\item[(3)] 2 diagonal matrices
		\begin{align}
			\sigma_3=\sqrt{\frac{3}{2}} \left(\begin{array}{ccc}
				1 & 0 & 0 \\
				0 & -1 & 0 \\
				0 & 0 & 0
			\end{array}\right), \quad
			\sigma_8=\frac{1}{\sqrt{2}}\left(\begin{array}{ccc}
				1 & 0 & 0 \\
				0 & 1 & 0 \\
				0 & 0 & -2
			\end{array}\right).
		\end{align}
		
		\item[(4)] 1 identity operator $\sigma_0=\mathds{I}$.	
	\end{itemize}
	
	These are $\mathfrak{su}(3)$ generators and satisfy the commutation relation
	\begin{equation}
		[\sigma_j, \sigma_{k}]= i 2 \sqrt{\frac{3}{2} }  \sum_l f_{jkl} \sigma_l,
	\end{equation}	
	where the nonzero structure constants are: $f_{123}=1$, $f_{147}=f_{246}=f_{257}=f_{345}=1/2$, $f_{156}=f_{367}=-1/2$, $f_{246}=1/2$ and $f_{458}=f_{678}=\sqrt{3}/2$.
	The anticommutation relation is
	\begin{equation}
		\{\sigma_j, \sigma_{k}\}=  2 \delta_{jk} \one +\sqrt{6} \sum_l g_{jkl} \sigma_l,
	\end{equation}	
	where nonzero structure constants are $g_{118}=g_{228}=g_{338}=-g_{888}=1/\sqrt{3}$, $g_{448}=g_{558}=g_{668}=g_{778}=-1/2\sqrt{3}$, $g_{146}=g_{157}=g_{256}=g_{344}=g_{355}=-g_{247}=-g_{366}=-g_{377}=1/2$.

	For a density operator $\rho=\frac{1}{3} (\one +\sum_{i=1}^8x_i\sigma_i)$, the coefficient of characteristic polynomial in Eq.~(\ref{eq:poly}) can be explicitly calculated using Newton identities, 
	\begin{equation}
		k a_k=\sum_{j=1}^k (-1)^{j-1} N_j a_{k-j}, k=1,2,3
	\end{equation}
	where $N_j=\sum_{l=1}^3 (\lambda_l)^j$ is the $j$-th power sum of all eigenvalues of $\rho$. In this way,
	we see that 
	\begin{equation}
		\begin{aligned}
			&	1! a_1=N_1=\Tr \rho=1,\\
			&	2! a_2=N_1^2-N_2=1-\Tr (\rho^2), \\
			&	3! a_3= N_1^2-N_1 N_2-2 N_2 +2 N_3=1- 3 \Tr (\rho^2) +2 \Tr (\rho^3).
		\end{aligned}
	\end{equation}
	The condition that $\rho \geq 0$ now becomes $j!a_j\geq 0$ for all $j$, which can be explicitly calculated by using structure constants $f_{jkl}$ and $g_{jkl}$ of commutation and anti-commutation relations $\sigma_{j}\sigma_k= \frac{1}{2}([\sigma_j,\sigma_k]+\{\sigma_j,\sigma_k\})=\frac{1}{2}( i 2 \sqrt{\frac{3}{2} }  \sum_l f_{jkl} \sigma_l+2 \delta_{jk} \one +\sqrt{6} \sum_l g_{jkl} \sigma_l)$. See Ref. \cite{Jakobczyk2001geometry} for more details.
\end{example}

\subsection{Generalized $\Theta$-conjugation and space-time metric}

One of the well-known examples of $\Theta$-conjugation is the
Hill-Wootters spin-flip operation $\rho\mapsto \sigma_y \mathcal{K}(\rho) \sigma_y$ \cite{Hill1997entanglement,Wootters1998entanglement}.
The corresponding geometric transformation is the parity transformation: $x_0\mapsto x_0, \vec{x}\mapsto -\vec{x}$.
This can be naturally generalized to the qudit case. Consider a special generalized $\Theta$-conjugation, which, when acting on GGM matrices, has the form
\begin{equation}
	\Theta(\sigma_0)=\sigma_0, \Theta(\sigma_j)=-\sigma_j, 1\leq j \leq d^2-1.
\end{equation}
 Namely, $\Theta$ acting on the Bloch representation of a Hermitian matrix $\rho=\sum_jx_j\sigma_j$ is of the form 
\begin{equation}
    \Theta(\rho) = x_0\sigma_0 - \sum_{j=1}^{d^2-1}x_\mu\sigma_j = x_0\sigma_0 - \vec{x}\cdot \vec{\sigma}. 
\end{equation} 
We then use the antilinear extension $\Theta(\lambda_1\rho_1+\lambda_2\rho_2):=\bar{\lambda}_1\Theta(\rho_1)+\bar{\lambda}_2\Theta(\rho_2)$ to give an antilinear superoperator.  
It is easily checked that $\Theta^{\ddag} =\Theta^{-1}$.
Notice that under this generalized $\Theta$-conjugation, a density operator $\rho$ may be mapped to a Hermitian operator $\Theta (\rho)$ with negative eigenvalues. This is because that the Bloch convex body for the $d\geq 3$ case does not have rotational symmetry.
To remedy this problem, we can shrink the Bloch vector $\vec{x}'=f_{\Theta} (\vec{x})$ such that $\vec{x}''=\vec{x}'/\lambda$ gives a density operator. 
The shrinking process works as follows: Suppose that $\vec{a}_{\vec{x}'}$ is the Bloch vector in $\vec{x}'$ direction with the maximum length $a$, then $\vec{x}'':=a \vec{x}'/\sqrt{d-1}$. This shrinking process may break the linearity in general. So for a given convex combination $p \rho_1 +(1-p)\rho_2$, we must first calculate the corresponding overall Bloch vector, then map it to a new Bloch vector. This can remedy the problems when we define generalized $\Theta$-conjugated fidelity and concurrence.

We can similarly consider the generalized $\Theta$-conjugation corresponding to the partial parity transformation, 
\begin{equation}
	\Theta(\sigma_{j_k})=-\sigma_{j_k}, \forall\, k=1,\cdots, q,
\end{equation}
while leaving all other $\sigma_j$ unchanged. In terms of the Bloch representation of a Hermitian matrix $\rho = \sum_jx_j\sigma_j$, this is written explicitly as 
\begin{equation}
    \Theta(\rho) = x_0\sigma_0 - \sum_{j=1}^{q}x_j\sigma_j + \sum_{j=q+1}^{d^2-1}x_j\sigma_j=x_0\sigma_0 - \vec{x}_a\vec{\sigma}_a+\vec{x}_b\vec{\sigma}_b,
\end{equation}
where $\vec{x}_a = (x_1,\cdots,x_{q})$, $\vec{x}_b = (x_{q+1},\cdots,x_{d^2-1})$ (similar for $\vec{\sigma}_a$ and $\vec{\sigma}_b$). The antilinear extension is also utilized to make $\Theta$ an antilinear superoperator.
The corresponding Bloch vector transformation is \begin{equation}
	\begin{aligned}
		x_{j_k}\mapsto &-x_{j_k}, \forall k=1,\cdots, q, \\
		x_{j_k}\mapsto x_{j_k}, &\forall k=q+1,\cdots, d^2-1.
	\end{aligned}
\end{equation}
This kind of generalized $\Theta$-conjugation is crucial for us to study Lorentzian invariance of qudit state.

\subsection{Quantum Euclidean invariance for qudit system}

\begin{figure}[t]
	\centering
	\includegraphics[width=7cm]{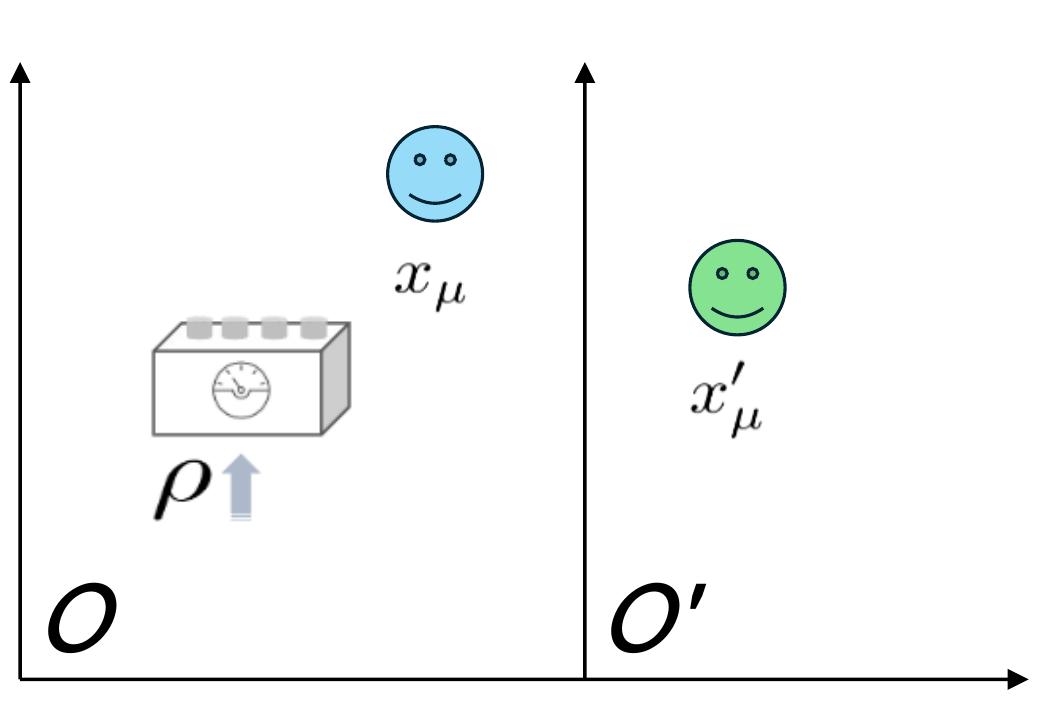}\\
	\caption{An illustration of the geometric transformation of quantum states for Bloch space-time vector.}\label{fig:bloch}
\end{figure}

Let us now consider Euclidean invariance of single qudit state. We assume that the space $\mathbb{R}^{d^2}$ is equipped with the Euclidean  norm $\|x\|_{E}=\sum_{\mu \nu}\delta_{\mu\nu}x_{\mu}x_{\nu}=\sum_{\mu=0}^{d^2-1} x_{\mu}^2$. 
In $\mathbf{H}(\mathbb{C}^{d})$, the norm corresponds to
\begin{equation}
	\langle \rho,\rho\rangle_{\rm HS}=	\Tr (\rho^2)=\frac{1}{d} \|x(\rho)\|_{E}^2=\frac{1}{d}(1+\|\vec{x}\|^2),
\end{equation}
which is nothing but the purity of the state. Thus the Euclidean invariance for single qudit state is quantum transformations which preserve the purity of states. In Bloch representation, this corresponds to the orthogonal group $O(d^2-1,\mathbb{R})$.
Since the qudit Bloch representation does not have rotational symmetry, we still need to do some shrinking if the rotated vector is not a Bloch vector that corresponds to a positive semidefinite operator.

For the bipartite case, it is convenient to introduce the joint observables $x_{\mu\nu}=\Tr(\rho \sigma_{\mu}\otimes \sigma_{\nu})$ and express the state as
\begin{equation}
\begin{aligned}
	\rho&=\frac{1}{d^2}\sum_{\mu,\nu} x_{\mu\nu} \sigma_{\mu}\otimes \sigma_{\nu} \\
	&=\frac{1}{d^2} \left(\sigma_0\otimes \sigma_0+\sum_{j\geq1} x_{j0}  \sigma_{j}\otimes \sigma_0 +  \sum_{j\geq1}  x_{0j}\sigma_0\otimes \sigma_j+\sum_{j,k \geq 1} x_{jk}  \sigma_j\otimes \sigma_k \right). \label{eq:biqudit}
\end{aligned}
\end{equation}
The tensor product of Hilbert-Schmidt matrices $\sigma_{\mu}\otimes \sigma_{\nu}$ provides a basis for the $\mathbf{H}(\mathbb{C}^d)\otimes \mathbf{H}(\mathbb{C}^d)$.
In this case, the Euclidean norm is
\begin{equation}
	\langle \rho ,\rho\rangle_{\rm HS}=\Tr (\rho^2) =	\frac{1}{d^2}\sum_{\mu \nu} x_{\mu \nu}^2.
\end{equation}
Since $x_{00}\equiv 1$, we see that the operation preserving the Euclidean norm is the group $O(d^4-1,\mathbb{R})$. 
For the $n$-qudit case, the generalization is straightforward,
\begin{equation}
	\rho=\frac{1}{d^n}\sum_{\mu_1\cdots \mu_n} x_{\mu_1\cdots\mu_n}  \sigma_{\mu_1}\otimes \cdots \otimes \sigma_{\mu_n}.
\end{equation}
The Euclidean norm is 
\begin{equation}
	\langle \rho ,\rho\rangle_{\rm HS}=\Tr (\rho^2) =	\frac{1}{d^n}\sum_{\mu_1\cdots \mu_n} x_{\mu_1\cdots \mu_n}^2.
\end{equation}
The symmetry group is $O(d^{2n}-1,\mathbb{R})$.
Thus we see that the Euclidean norm corresponds to the purity of the state, and it can be expressed as the Hilbert-Schmidt inner product of $\rho$ with itself. The Euclidean invariance of quantum states is the equivalent class of states which is invariant under the group $O(d^{2n}-1,\mathbb{R})$.
To consider a more general case, i.e., the Euclidean rotation over the $d^{2n}$ space, we need to do some shrinking of the Bloch tensors we obtained first, then the Euclidean-invariant class of state will have more general meaning and the symmetry group will be $O(d^{2n},\mathbb{R} )$.

\subsection{Quantum Lorentzian invariance for qudit system}

To investigate the Lorentzian invariance, we need to consider the space-time $\mathbb{R}^{p,q}$ with the Minkowskian metric defined as $\eta^{\mu\nu}=\mathrm{diag}(+,\cdots,+,-,\cdots,-)$, where $p+q=d^2$. 
The Lorentzian norm is thus $\|x\|_L^2=\sum_{\mu,\nu}\eta^{\mu\nu}x_{\mu}x_{\nu}=\sum_{\mu=0}^{p-1}x_{\mu}^2-\sum_{\mu=p}^{p+q-1}x_{\mu}^2$.
The Lorentz transformation is a linear transformation
\begin{equation}
	x'_{\mu}=\sum_{\nu}\Lambda_{\mu}^{\,\,\nu} x_{\nu}
\end{equation}
such that $\|x'\|^2=\|x\|^2$. The set of all Lorentz transformations forms Lorentz group $O(p,q)=\{\Lambda\in GL(p+q;\mathbb{R})\,|\, \Lambda^T \eta \Lambda =\eta \}$.

For single-qubit state in Bloch representation $\rho=\frac{1}{2}(\one+\vec{x}\cdot \vec{\sigma})$, the Hill-Wootters conjugation gives $\tilde{\rho}=\sigma_y \rho^* \sigma_y =\frac{1}{2}(\one-\vec{x}\cdot \vec{\sigma})$, then we see that
\begin{align}
	4 \det \rho = 2 \Tr (\rho \tilde{\rho})
	=x_0^2-x_1^2-x_2^2-x_3^2
	=\sum_{\mu,\nu}\eta^{\mu\nu}x_{\mu}x_{\nu}.
\end{align}
For higher-dimensional situations, we can utilize the generalized $\Theta$-conjugation which maps $\sigma_j$ to $-\sigma_j$ for $j\geq p$ and leaves all other Hilbert-Schmidt basis matrices unchanged. We introduce the generalized $\Theta$-conjugated Hilbert-Schmidt inner product $\langle \rho, \chi\rangle_{\Theta}:=	\langle \rho,\Theta(\chi)\rangle_{\rm HS}=\Tr(\rho \Theta(\chi))$. Then for Hilbert-Schmidt operator, we have $\langle \sigma_{\mu},\sigma_{\nu}\rangle_{\Theta}=\eta_{\mu\nu}$, hence the Lorentzian norm for a density operator can be expressed as
\begin{equation}
	\langle \rho,\rho\rangle_{\Theta}=\Tr(\rho \Theta(\rho))=\frac{1}{d}\sum_{\mu,\nu}\eta_{\mu\nu}x_{\mu}x_{\nu}.
\end{equation}
We usually take $p=1$, namely, $\Theta(\sigma_0)=\sigma_0$ and $\Theta(\sigma_j)=-\sigma_j$ for all $j\geq 1$.

For bipartite state $\rho$ as in Eq.~(\ref{eq:biqudit}), we introduce $R_{\rho}=\rho \Theta^{\otimes  2} (\rho)$, then the Lorentzian norm is therefore
\begin{equation}
    \begin{aligned}
    	\langle \rho ,\rho\rangle_{\Theta}&=\Tr (R_{\rho} )= x^2_{00}-\sum_{j=1}^{d^2-1} (x^2_{0j}  +x_{j0}^2)+\sum_{j,k=1}^{d^2-1}x^2_{jk}\\
    	&= L_0-L_1+L_2.
    \end{aligned}
\end{equation}
Hereinafter, $L_k$ denotes the sum of all terms $x^2_{j_1\cdots j_n}$ with $k$ space-like indices; it is called sector length \cite{aschauer2003local,wyderka2020characterizing}, which is invariant under local unitary transformations and can be used to describe the correlation structure of quantum states. 

Similar to Euclidean case, the Lorentzian invariance is characterized by equivalence class which is invariant under Lorentz transformation. One of the main problems here is that the Lorentz boost may map a Bloch space-time vector $x_{\mu}$ to the one with time component $x'_0\neq 1$. This can also be remedied by shrinking or dilating the Bloch space-time vectors.

\subsection{Quantum geometric invariance and entanglement distribution}
\label{sec:mono}

One of the characteristic features of quantum correlations is that they cannot be shared freely in a many-body system. This phenomenon is now known as monogamy of quantum correlations. It is shown that there exist monogamy relations for Bell nonlocality  \cite{Pawlowski2009monogamy,Kurzynski2011correlation,Jia2016,jia2017exclusivity,Jia2018entropic}, quantum steering \cite{Reid2013monogamy}, and entanglement \cite{CKW2000,Koashi2004monogamy,Eltschka2015monogamy,eltschka2018distribution,eltschka2020maximum}.
In this subsection, we study monogamy equalities of entanglement restricted by quantum geometric invariance.

Now consider von Neumann entropy $S(\rho) = -\Tr (\rho \ln \rho)$. Using Mercator series for $\ln \rho= \ln (\one + (\rho -\one)) \simeq  (\rho -\one) -(\rho-\one)^2/2+\cdots$ and keeping only linear term, we obtain a quantity called linear entropy
\begin{equation}
	S_L(\rho) = \Tr (\rho) -\Tr (\rho^2).
\end{equation}
Note that many authors made the convention for linear entropy differing from the one we give here with a multiple two.
For two-qubit state $\psi_{AB}$, linear entropy of the reduced state are related to concurrence of the state via
\begin{equation}
	2S_L(\rho_A)=	2S_L(\rho_B)=C_{A:B}^2(\psi_{AB}),
\end{equation}
where $\rho_A = \Tr_B(\rho_\psi)$, $\rho_B = \Tr_A(\rho_\psi)$ with $\rho_\psi=|\psi_{AB}\rangle\langle\psi_{AB}|$. From this, we see that linear entropy can be used to measure quantum correlation of the state.  
Using this observation, Eltschka and Siewert derive the distribution of quantum correlations from quantum Lorentzian invariance of qubit states \cite{Eltschka2015monogamy}. Here we generalize this result to qudit case (see also \cite{eltschka2018distribution}).

For $n$-qudit state $\rho$, the Euclidean norm is 
\begin{equation}
	d^n \Tr ( \rho^2 )= L_0+L_1+\cdots +L_n.
\end{equation}
Consider the generalized $\Theta$-conjugation $\Theta(\sigma_0)=\sigma_0$ and $\Theta(\sigma_j)=-\sigma_j$ ($j\geq 1$). If we set $R_\rho=\rho \Theta^{\otimes n}(\rho)$, the Lorentzian norm is given by
\begin{equation}
	d^n\Tr (R_{\rho}) =L_0-L_1+\cdots +(-1)^n L_n.
\end{equation}
It is easy to see that
\begin{equation}\label{eq:R}
	(-1)^n d^n\Tr (R_{\rho}) =d^n \Tr (\rho^2) -2 \sum_{k=\delta_n}^{\lfloor n/2 \rfloor}L_{2k -\delta_n},
\end{equation}
where $\delta_n=(1+(-1)^n)/2$.
Let us denote the bipartition of $n$-particle system as $\mathcal{A}|\mathcal{A}^c$. The linear entropy $S_L({\mathcal{A}}):=S_L(\rho_\mathcal{A})$ for this bipartition measures the entanglement between $\mathcal{A}$ and $\mathcal{A}^c$.
It is straightforward to check that
\begin{equation}
	\Tr (R_{\rho})  =\sum_{\mathcal{A}|\mathcal{A}^c} a_{\mathcal{A}|\mathcal{A}^c} S_L({\mathcal{A}}), \label{eq:dis}
\end{equation}
where we have denoted the linear entropy for empty partition as $S_{L} (\emptyset) =1$.
This characterizes the distribution of the entanglement for a multipartite state, since $ \Tr (R_{\rho})$ measures the overall entanglement of the state $\rho$.

\begin{example}
	Consider an $n$-qubit state $\psi$, the coefficient in Eq.~(\ref{eq:dis}) has been carefully calculated in Ref.~\cite{Eltschka2015monogamy}:
	\begin{equation}
		\Tr (R_{\psi}) = (-1)^{n+1} S_{L}(\rho)+\sum_{\mathcal{A},\mathcal{A}^c\neq \emptyset} (-1)^{|\mathcal{A}^c|+n+1} S_L(\mathcal{A}) .
	\end{equation}
	This characterizes how entanglement is distributed over these $n$ particles. For $n=3$ case, the constraint becomes trivial $S_L(\mathcal{A})=S_L(\mathcal{A}^c)$. But for $n=4$, we see that 
	\begin{equation}
		\begin{aligned}
			&E(ABCD)
			+ E(AB|CD)+E(AC|BD)+E(AD|BC)\\
			&+E(BC|AD)+E(BD|AC)   +E(CD|AB) \\
			=\ &E(A|BCD)+E(B|ACD)+E(C|ABD)+E(D|ABC) \\
			&+E(ABC|D)+E(ABD|C)+E(ACD|B)+E(BCD|A).
		\end{aligned}
	\end{equation}
	Here we use $E(\mathcal{A}|\mathcal{A}^c)=S_L(\mathcal{A})=S_L(\Tr_{\mathcal{A}^c}(\rho_\psi))$ to denote the entanglement between $\mathcal{A}$ and $\mathcal{A}^c$.
	This provides a constraint of the pattern for entanglement of the state.
\end{example}

\begin{example}
	Let us now consider an example of $3$-qutrit state $\rho_{ABC}=|\psi\rangle \langle \psi|$, whose Bloch representation can be given explicitly using matrices in example \ref{exp:3d}: 
	\begin{equation}
		\rho_{ABC}=\sum_{\mu,\nu,\gamma=0}^{8} x_{\mu\nu \gamma} \sigma_{\mu} \otimes \sigma_{\nu}\otimes \sigma_{\gamma}. 
	\end{equation}
	From Eq.~(\ref{eq:R}), we see that 
	\begin{equation}
		(-1)^3 3^3 \Tr (R_{\psi}) =3^3 \Tr \rho_{\psi}^2 -2 (L_0+L_2).
	\end{equation}
	$L_0=1$ is a constant term. For $L_{2}=L_{2}^{AB}+L_2^{AC}+L_2^{BC}$, consider $L_{2}^{AB}=\sum_{\mu,\nu=1}^{8} x_{\mu\nu 0}^2$. It can be derived from reduced density matrix $\rho_{AB}$, $\rho_A$ and $\rho_B$, that
	\begin{equation}
		L_2^{AB}= 3^2 \Tr \rho_{AB}^2 - 3 \Tr \rho_A^2 -3\Tr \rho_B^2.
	\end{equation}
	This further implies that 
	\begin{equation}
		L_2^{AB}= 3^2(1-S_L(\rho_{AB}))- 3 (2-S_L(\rho_A)-S_L(\rho_B)).
	\end{equation}
	All other terms can be calculated similarly. In this way, we obtain all coefficients in Eq.~\eqref{eq:dis}, and thus the distribution formula of entanglement over the state $\psi$.
\end{example}

Notice that for more complicated cases, we could also use Lorentzian norms $\|\rho\|_{p,q}$ with a metric $\eta_{\mu,\nu}=(+,\cdots,+,-,\cdots,-)$. By assuming the invariance of these norms, we can obtain a formula that characterizes the distribution of entanglement over a multipartite state.
This means that these monogamy relations can be regarded as a result of the quantum geometric invariance.

\section{Application: antilinear superoperator symmetry of the open quantum system}
\label{sec:symmetry}

In this section, as an application of antilinear superoperators, let us consider the symmetries characterized by antilinear superoperators for open quantum systems.
Consider an open quantum system governed by Lindblad master equation  (also called quantum Liouville equation or GKSL equation) \cite{lindblad1976generators,gorini1976completely},
\begin{equation}\label{eq:Lindblad}
    \frac{d \rho}{d t}=\mathcal{L}(\rho)=\frac{1}{i\hbar}[H, \rho]+\sum_{i=1}^M\gamma_i( L_{i} \rho L_{i}^{\dagger}-\frac{1}{2}\{L_{i}^{\dagger} L_{i}, \rho\}),
\end{equation}
where $\mathcal{L}$ is a superoperator called Lindbladian, $H$ and $L_i$'s are Hamiltonian and Lindblad jump operators respectively, and $\gamma_i$'s are dissipation rate.
The study of symmetries of Lindbladian has attracted much attention in recent years \cite{buvca2012note,Albert2014symmetries,Lieu2020symmetry,Lieu2020tenfold,Altland2021symmetry,de2022symmetry,McDonald2022exact,Sa2023symmetry,Kawabata2023symmetry}. Antiunitary symmetry plays a crucial role in classifying Lindbladians. In this section, we will consider the most general case, where symmetries are characterized by superoperators.
These superoperator symmetries are generally not invertible, which may have potential applications in generalizing non-invertible symmetries \cite{cordova2022snowmass,brennan2023introduction,mcgreevy2023generalized,luo2023lecture,shao2024whats,SchaferNameki2024ICTP,Bhardwaj2024lecture} from closed systems to open systems.

\subsection{Superoperator symmetries: invertible and non-invertible}

For the closed quantum system with time evolution governed by the Schr\"{o}dinger equation, symmetry is an operator (or a collection of operators) that transforms the quantum state into the quantum state, and the expression of Schr\"{o}dinger equation remains unchanged (this means that the symmetry operator commutes with Hamiltonian). 
From Wigner's theorem, we know that there are two possible choices for such kind of symmetry operators: unitary and antiunitary operators.
Typical antiunitary symmetry operators are discrete symmetries, like time-reversal symmetry ($\mathsf{T}$), charge conjugation symmetry ($\mathsf{C}$) and composition of parity symmetry and time-reversal symmetry ($\mathsf{PT}$) \cite{peskin2018introduction,sachs1987physics,geru2019time,bender2019pt}.
The symmetry operator can take a more general form for an open quantum system.
Since the combination of an open system with its environment is a closed system, the unitary and antiunitary symmetry of this closed system results in CPTP and antilinear CPTP symmetries of the open quantum system. Formally, we can introduce the following definitions of superoperator symmetry for an open quantum system, following the approach in Refs.~\cite{buvca2012note,Albert2014symmetries,Lieu2020symmetry}.

\begin{definition}[Weak superoperator symmetry]
      A weak superoperator symmetry of an open quantum system governed by Lindblad master equation (as in Eq.~\eqref{eq:Lindblad}) is a linear or antilinear CPTP map $\mathcal{E}$ which is commutative with Lindbladian $\mathcal{L}$, i.e., 
      \begin{equation}\label{eq:StrongSym}
                \mathcal{L}\circ \mathcal{E}=\mathcal{E}\circ \mathcal{L}.
      \end{equation}
      If the Kraus rank (the minimal number of operators in the Kraus decomposition) of $\mathcal{E}$ is $k$, then it is called a rank-$k$ superoperator symmetry. The rank-1 CPTP superoperator symmetry corresponds to the unitary operator symmetry.
\end{definition}

Notice that the expression \eqref{eq:StrongSym} is different from that in Refs.~\cite{buvca2012note,Albert2014symmetries,Lieu2020symmetry}, where they use 
\begin{equation}\label{eq:unitarySym}
  \mathcal{E}\circ \mathcal{L}\circ \mathcal{E}^{\dagger}=\mathcal{L},  
\end{equation}
or 
\begin{equation}\label{eq:invertibleSym}
    \mathcal{E}\circ \mathcal{L}\circ \mathcal{E}^{-1}=\mathcal{L}.
\end{equation}
These expressions only work for unitary or invertible symmetries. Since we aim to make the Lindblad equation invariant under the symmetry, we will take Eq.~\eqref{eq:StrongSym} as a basic definition of strong superoperator symmetry. 
When $ \mathcal{E}$ is a unitary operator
\begin{equation}
    \mathcal{E}=U(\cdot)U^{\dagger}
\end{equation}
with $U$ a unitary operator. We have $ \mathcal{E}^{\dagger}=\mathcal{E}^{-1}$. Eq.~\eqref{eq:unitarySym} implies $\mathcal{E}\circ \mathcal{L}=\mathcal{L}\circ \mathcal{E}$.
For general non-unitary and more general non-invertible symmetry, the expressions  in Eqs.~\eqref{eq:unitarySym} and \eqref{eq:invertibleSym} do not work; instead, we should use expression~\eqref{eq:StrongSym} in the above definition to guarantee the invariance of Lindblad equation invariant under the symmetry. Under the superoperator symmetry, we have
\begin{equation}
    \frac{d \mathcal{E}( \rho)}{dt}=\mathcal{E}\circ \mathcal{L} (\rho)=\mathcal{L}\circ \mathcal{E}(\rho).
\end{equation}
Even when $\mathcal{E}$ is non-invertible, the Lindblad equation remains invariant. We have assumed that $\mathcal{E}$ is a CPTP map, which guarantees that $\mathcal{E}(\rho)$ is a density operator.  Also, notice that a CPTP map generally does not have a CPTP inverse, even when invertible.

Recall that for closed quantum systems, the symmetry is a unitary operator that commutes with the Hamiltonian. If we replace the unitary operator with a Hermitian operator, we obtain a physical observable $Q=Q(0)$; if this operator commutes with the Hamiltonian, then $Q$ is called a conserved quantity.
Notice that $[Q,H]=0$ implies that 
\begin{equation}
    \frac{d Q(t)}{dt} =0,
\end{equation}
where $Q(t)=\exp(it H) Q(0)\exp(-it H)$.
This is equivalent to
\begin{equation}
      \frac{d \langle \psi(t)| Q(0) |\psi(t)\rangle }{dt} =0,
\end{equation}
for all $\psi(t)$.
For open quantum systems, a similar notion can be introduced.
The time evolution determined by Lindbladian is 
\begin{equation}
    \mathcal{U}(t)=\exp(it \mathcal{L}).
\end{equation}
To ensure that 
\begin{equation}
    \frac{d}{dt} \Tr [ Q(0)\rho(t)] =0,
\end{equation}
we see this implies 
\begin{equation}
   \frac{d}{dt} Q(t)=  \frac{d}{dt} \exp(-it \mathcal{L^{\dagger}}) (Q(0)) =0.
\end{equation}
Notice that $\mathcal{L}$ is in general non-Hermitian. This means that the conserved quantity is defined as Hermitian operator $Q$ which satisfies~\cite{tarasov2008quantum,Albert2014symmetries}
\begin{equation}
    \mathcal{L}^{\dagger}(Q)=0.
\end{equation}
Notice that for closed quantum systems, symmetry and conserved quantity are related by Noether's theorem. When $Q$ is a conserved quantity, it generates a symmetry $V=\exp(i\alpha Q)$, which satisfies $[V,H]=0$.
The relationship between weak superoperator symmetry and conserved quantities remains an open problem to explore \cite{tarasov2008quantum,Albert2014symmetries}.
But there exists a special case called strong superoperator symmetry that can simplify the problem.

\begin{definition}[Strong superoperator symmetry]
\begin{enumerate}
    \item Let $\mathcal{E}$ be a linear rank-$k$ superoperator symmetry with Kraus operators $E_i$, $i=1,\cdots,k$. It is called a rank-$k$ strong linear superoperator symmetry if \cite{buvca2012note} 
      \begin{equation}\label{eq:ELcomm}
          [E_i,L_j]=[E_i,H]=[E_i,L_{j}^{\dagger}]=0,\quad  \forall\, i=1,\cdots,k, j=1,\cdots,M.
      \end{equation}
      The strong linear superoperator symmetry is automatically a weak linear superoperator symmetry but the converse is not true in general.

      \item The definition of antilinear strong symmetry $\mathcal{E}$ is more subtle. We cannot simply impose Eq.~\eqref{eq:ELcomm} on the Kraus operators, as this does not guarantee that $\mathcal{E}$ commutes with the Lindbladian $\mathcal{L}$.
      We will use the natural representations of $\mathcal{L}$ and $\mathcal{E}$ (see Lemma~\ref{lemma:relation}),
      \begin{gather}
          N_{\mathcal{L}}=-i(H\otimes \one -\one \otimes H^T)+\sum_{i>0} \gamma_i[L_i\otimes L^*_i-\frac{1}{2}(L^{\dagger}_iL_i\otimes \one +\one \otimes L_i^TL_i^{*})],\\
          N_{\mathcal{E}}=\sum_{i=1}^k (E_i\otimes E_i^{*} )\circ \mathcal{K}.
      \end{gather}
    An antilinear superoperator $\mathcal{E}$  is called an antilinear strong superoperator symmetry of $\mathcal{L}$ if and only if
    \begin{equation}
    \begin{aligned}
        [(E_i\otimes E_i^{*} )\circ \mathcal{K},-i(H\otimes \one )]= [(E_i\otimes E_i^{*} )\circ \mathcal{K},i(\one \otimes H^T)]=0, \quad \forall\, i, \\
         [(E_i\otimes E_i^{*})\circ \mathcal{K}, L_i\otimes L^*_i ]=0,\quad \forall\, i,j,\\
         [(E_i\otimes E_i^{*} )\circ \mathcal{K}, L^{\dagger}_iL_i\otimes \one]= [(E_i\otimes E_i^{*} )\circ \mathcal{K},\one \otimes L_i^TL_i^{*})]=0,\quad \forall\, i,j.
    \end{aligned}
    \end{equation}
    From the above definition, we see that the action of an antilinear strong superoperator symmetry on an open quantum system does not change the form of the Lindblad master equation. And antilinear strong superoperator symmetry is a special case of antilinear weak superoperator symmetry.
\end{enumerate}
\end{definition}

The Kraus-rank-1 linear superoperator symmetry of the open Heisenberg XXZ spin 1/2 chain is discussed in Ref.~\cite{buvca2012note}. However, the higher Kraus-rank linear symmetry and the antilinear superoperator symmetry remain largely unexplored \cite{buvca2012note,Albert2014symmetries,Lieu2020symmetry,Lieu2020tenfold,Altland2021symmetry,de2022symmetry,McDonald2022exact,Sa2023symmetry,Kawabata2023symmetry}.
A detailed discussion about the generalized symmetry for open quantum systems and their corresponding symmetry-protected phases will be provided in our future work~\cite{jia2024SPT}.

\subsection{Kramers' degeneracy}

A crucial property of antiunitary symmetry for a closed quantum system is the Kramers' degeneracy \cite{kramers1930theorie}.
The open quantum system also has this kind of degeneracy when considering steady states.
For a weak antiunitary  superoperator $\mathcal{M}$ (Definition~\ref{def:antiunitary}), when $\mathcal{M}^2=-\mathcal{I}$ with $\mathcal{I}$ the identity channel, we have
\begin{equation}\label{eq:zero}
    \langle \rho,\mathcal{M}(\rho)\rangle=0.
\end{equation}
To prove this, recalling that $\mathcal{M}^{\ddagger}=\mathcal{M}^{-1}$, we have
\begin{equation}
    \langle \rho, \mathcal{M}(\rho)\rangle 
    = -\langle \mathcal{M}^2(\rho) ,\mathcal{M}(\rho)
    =-\langle \mathcal{M}^{\ddagger}\circ \mathcal{M} \circ \mathcal{M}(\rho),\rho\rangle^*
    =-\langle \rho ,\mathcal{M}(\rho)\rangle,
\end{equation}
which implies Eq.~\eqref{eq:zero}.

\begin{theorem}[Kramers' degeneracy for steady states of open system]
    Consider an open system with Lindbladian $\mathcal{L}$. If there is a weak superoperator symmetry characterized by a weak antiunitary CPTP map $\mathcal{M}$ which satisfies $\mathcal{M}^2=-\mathcal{I}$, then the steady state space must be degenerate. 
\end{theorem}

\begin{proof}
Notice that $\mathcal{M}$ is an antilinear CPTP map, meaning $\mathcal{M}(\rho)$ is a density operator if $\rho$ is a density operator.
 For the steady state $\rho$, we have
 \begin{equation}
     \mathcal{L}(\rho)=0.
 \end{equation}
 Since $\mathcal{M}$ is a weak symmetry of $\mathcal{L}$, we obtain
 \begin{equation}
     \mathcal{L}\circ \mathcal{M}(\rho)=\mathcal{M}\circ \mathcal{L}(\rho)=0.
 \end{equation}
 This implies that $\mathcal{M}(\rho)$ is also a steady state.
 We have shown that when $\mathcal{M}^2=-\mathcal{I}$,  $\mathcal{M}(\rho)$ is orthogonal to $\rho$. Thus, $\mathcal{M}(\rho)$ and $\rho$ are independent steady states, leading us to our conclusion.
\end{proof}

\section{Conclusion}
\label{sec:conclusion}

In this work, we systematically investigate the antilinear superoperators, including various representations and properties of antilinear quantum channels, antilinearly unital superoperators, antiunitary superoperators, and generalized $\Theta$-conjugations.
The generalized $\Theta$-conjugation plays an important role in studying the geometric properties of quantum states.
Using Bloch representation of a higher-dimensional quantum system, different generalized $\Theta$-conjugations provide different metrics of the space of Bloch space-time vectors; both the Lorentzian and Euclidean invariance of quantum states can be investigated in this framework.
The invariant class is just the states with the same norms in the corresponding space.
Using these geometric properties, we derive the monogamy equalities of entanglement from this geometric invariance for many-body quantum states.
This means that distribution of entanglement over a multipartite state can be regarded as a result of geometric invariance.
The application of an antilinear superoperator in characterizing the symmetry of an open quantum system is also briefly discussed. Time reversal symmetry and $\mathsf{PT}$-symmetry are typical examples.

The generalized $\Theta$-conjugated fidelity and concurrence are also introduced. 
Though we did not discuss more details about $\Theta$-conjugated fidelity in this work, we would like to point out that the quality is closely related to the antilinear superoperator.
Notice that for pure state $F_1(\psi,\varphi) =|\langle \psi|\varphi\rangle|$; Wigner's theorem claims that quantum operations that preserve this fidelity are unitary and antiunitary ones. It is natural to ask what is the quantum operations that preserve the generalized $p$-norm $\Theta$-conjugated fidelity. This interesting open problem will be left for our future study.
For the qubit case, the generalized $\Theta$-conjugated concurrence is related to the norm of Bloch vectors; however, for higher-dimensional case, there is no such correspondence. 

The framework is also useful for investigating various quantum correlations, including Bell nonlocality, quantum steering, and quantum entanglement. Especially for the two-particle case, using the Bloch vectors, we can obtain geometric bodies corresponding to these correlations and study their relations with antilinear superoperators. This part is also left for our future study.

We also discussed the antilinear superoperator symmetries of open quantum systems, which may have applications in understanding generalized symmetries and symmetry-protected topological phases for mixed states. A detailed discussion will be provided in our future work \cite{jia2024SPT}.

\acknowledgements
 
\emph{Z.J. is the corresponding author of this work.
Z.J. and D.K. are supported by the National Research Foundation in Singapore and A*STAR under its CQT Bridging Grant.
L.W. acknowledges Nelly Ng for discussions during her visit to NTU. S.T. would like to thank Professor Uli Walther for his constant support and encouragement.
S.T. was partly supported by NSF grant DMS-2100288 and Simons Foundation Collaboration Grant for Mathematicians \#580839, and is now supported by funding from Quantum Symmetry Group of BIMSA. All authors are grateful for the referee’s valuable suggestions. }

\appendix

\section{More on antilinear superoperators}
\label{app:AntiSup}
In this appendix, we collect some other properties of antilinear superoperators. The Kraus-rank-1 case has been extensively investigated in Ref.~\cite{uhlmann2016anti}. Here we present some generalizations about antilinear superoperators. Two of the main changes are that $\Theta$-conjugation becomes generalized $\Theta$-conjugation and antiunitary operator becomes weak antiunitary superoperators.

Given a Hilbert space $\mathcal{X}$, we define $\mathbf{B}^{(n)}(\mathcal{X})$ recursively as follows: $\mathbf{B}^{(0)}(\mathcal{X})=\mathcal{X}$, and $\mathbf{B}^{(n)}(\mathcal{X})$ is the set of all linear maps from $\mathbf{B}^{(n-1)}(\mathcal{X})$ to $\mathbf{B}^{(n-1)}(\mathcal{X})$.
The $n$-th order transformation between two quantum systems $\mathcal{X}$ and $\mathcal{Y}$ is a map between $\mathbf{B}^{(n-1)}(\mathcal{X})$ and $\mathbf{B}^{(n-1)}(\mathcal{Y})$; the set of all $n$-th order transformations is denoted as $\mathbf{B}^{(n)}(\mathcal{X},\mathcal{Y})$. 
In this sense, a quantum channel is a $2$nd order transformation.
For antilinear case, we have similar definition: $\mathbf{B}^{(n)}_{\rm anti}(\mathcal{X},\mathcal{Y})$ consists of all antilinear maps from $\mathbf{B}^{(n-1)}(\mathcal{X})$ to $\mathbf{B}^{(n-1)}(\mathcal{Y})$.
In quantum information theory, we are mainly interested in $\mathbf{B}^{(0)}(\mathcal{X})$ (state vectors), $\mathbf{B}^{(1)}(\mathcal{X},\mathcal{Y})$ and  $\mathbf{B}^{(1)}_{\rm anti}(\mathcal{X},\mathcal{Y})$ (density operators and transformations of state vectors),  $\mathbf{B}^{(2)}(\mathcal{X},\mathcal{Y})$ and  $\mathbf{B}^{(2)}_{\rm anti}(\mathcal{X},\mathcal{Y})$ (quantum operations).

If an $n$-th order antilinear transformation $\mathcal{M} \in \mathbf{B}^{(n)}_{\rm anti}(\mathcal{X})$ is bijective, then it is invertible and its inverse is also antilinear. 
The spectrum of an antilinear transformation (whenever exists) consists of a collection of concentric circles with $0$ as their common center in the complex plane. Namely, if $\lambda$ is an eigenvalue of $\mathcal{M}$, then for arbitrary $\alpha\in \mathbb{R}$, $e^{i\alpha} \lambda$ is also an eigenvalue.
The eigenstates of an $n$-th order antilinear transformation are  $(n-1)$-th order linear transformations, e.g., the eigenstates of superoperators are operators (called eigen-operators). If an antilinear transformation has eigenvalues and eigenstates, it is called diagonalizable.
For any diagonalizable $\mathcal{M}$, the eigenvalues of $\mathcal{M}^2$ are non-negative real numbers.
A crucial extreme example is the time reversal operator $\mathsf{T}$, for which $\mathsf{T}^2=\pm \mathcal{I}$ \cite{peskin2018introduction,sachs1987physics,geru2019time,bender2019pt}. When $\mathsf{T}^2=-\mathcal{I}$, $\mathsf{T}$ must not be diagonalizable.

An antilinear superoperator $\mathcal{M}$ is called Hermitian if $ \mathcal{M}^{\ddagger}=\mathcal{M}$, and skew Hermitian if $\mathcal{M}^{\ddagger}=-\mathcal{M}$.
Recall that natural representation preserves the Hermitian adjoint $N(\mathcal{M}^{\ddag})=N(\mathcal{M})^{\ddag}$, hence it is easy to verify that
the linearization of natural representation of an antilinear Hermitian (resp. skew Hermitian) superoperator must be symmetric (resp. skew-symmetric).

\begin{theorem}
   If an antilinear superoperator $\mathcal{M}$ is Hermitian, then there exists a basis of eigen-operators. If antilinear $\mathcal{M}$ is skew Hermitian, then it has no eigen-operators.
\end{theorem}

\begin{proof}
    Fix the bases of underlying Hilbert spaces, then the assertions followed by using the natural representation and lemma~3.4 in Ref.~\cite{uhlmann2016anti}.
\end{proof}

It can also be proved that antilinear $\mathcal{M}$ is diagonalizable if and only if $\mathcal{M}$ is Hermitian in some given inner product.
Any antilinear superoperator can be decomposed into a summation of an antilinear Hermitian superoperator and an antilinear skew Hermitian superoperator.

For antilinear superoperator $\mathcal{M}$, notice that $\langle \rho ,\mathcal{M}^{\ddag}\circ \mathcal{M}(\rho)\rangle=\langle \mathcal{M}(\rho),\mathcal{M}(\rho)\rangle \geq 0$ for all $\rho$, which implies that  $\mathcal{M}^{\ddag} \circ \mathcal{M}\geq 0$. Similarly $\mathcal{M}\circ \mathcal{M}^{\ddag}\geq 0$.
We then can introduce the linear superoperators $|\mathcal{M}|_l=\sqrt{\mathcal{M}^{\ddag} \circ \mathcal{M}}$ and $|\mathcal{M}|_r=\sqrt{\mathcal{M}\circ \mathcal{M}^{\ddag}}$, whose natural representations are both positive semidefinite operators.
 
\begin{theorem}[Polar decomposition]
Let $\mathcal{M}$ be an antilinear superoperator. There exists a weak antiunitary superoperator $\mathcal{U}$ such that
\begin{equation}
    \mathcal{M}=\mathcal{U}\circ |\mathcal{M}|_l=|\mathcal{M}|_r \circ \mathcal{U}.
\end{equation}
\end{theorem}

\begin{proof}
Suppose $\rho_i$'s are a basis of eigen-operators of $|\mathcal{M}|_l$ with eigenvalues $\lambda_i$. Let $\tilde{\chi}_i=\mathcal{M}(\rho_i)$, then
$\langle \tilde{\chi}_i,\tilde{\chi}_i\rangle= \lambda_i^2$.
Define $\chi_i=\tilde{\chi}_i/\lambda_i$, and $\mathcal{U}: \rho_i\mapsto \chi_i$ (antilinearly extend to the whole space).
It is easy to verify that $\mathcal{M}(\rho)=\mathcal{U}\circ |\mathcal{M}|_l(\rho)$ for all $\rho$.

Then using $\mathcal{M}=\mathcal{U}\circ |\mathcal{M}|_l \circ \mathcal{U}^{\ddag}\circ \mathcal{U}$,
since $(\mathcal{U}\circ |\mathcal{M}|_l \circ \mathcal{U}^{\ddag})\circ (\mathcal{U}\circ |\mathcal{M}|_l \circ \mathcal{U}^{\ddag})^{\ddag}= \mathcal{M}\circ \mathcal{M}^{\ddag}$, we see $\mathcal{U}\circ |\mathcal{M}|_l \circ \mathcal{U}^{\ddag}=|\mathcal{M}|_r$.
This completes the proof.
\end{proof}


Antilinear superoperator $\mathcal{M}$ is called an involution if $\mathcal{M}^2=\mathcal{I}$; it is called a skew involution if $\mathcal{M}^2=-\mathcal{I}$.
If $\mathcal{M}$ is an involution (resp. skew involution), then $\mathcal{M}^{\ddag}$ is also an involution (resp. skew involution).
Since for involution or skew involution, we have $[\mathcal{M}\mathcal{M}^{\ddagger},\mathcal{M}^{\ddagger}\mathcal{M}]=0$, it follows that $\sqrt{\mathcal{M}\mathcal{M}^{\ddagger}}\sqrt{\mathcal{M}^{\ddagger}\mathcal{M}}=\sqrt{\mathcal{M}\mathcal{M}^{\ddagger}\mathcal{M}^{\ddagger}\mathcal{M}}=\mathcal{I}$. This implies that $|\mathcal{M}|_r^{-1}=|\mathcal{M}|_l=|\mathcal{M}^{\ddag}|_r$.

\begin{theorem}\label{thm:involution}
   Let $\mathcal{M}$ be an antilinear superoperator. The following hold:
   \begin{enumerate}
       \item If $\mathcal{M}$ is an involution, then we have the polar decomposition 
       \begin{equation} \label{eq:Minvo}
      \mathcal{M}=|\mathcal{M}|_r\circ \mathcal{U}=\mathcal{U}^{\ddag}\circ |\mathcal{M}|_r^{-1}.
       \end{equation}
     Then $\mathcal{U}$ must be a generalized $\Theta$-conjugation.
     \item If $\mathcal{M}$ is a skew involution, then we have the polar decomposition 
       \begin{equation}
         \mathcal{M}=|\mathcal{M}|_r\circ \mathcal{U}= -\mathcal{U}^{\ddag} \circ |\mathcal{M}|_r^{-1}.
       \end{equation}
     Then $\mathcal{U}$ must be a generalized skew $\Theta$-conjugation.
   \end{enumerate}
\end{theorem}

\begin{proof}
  1.  Notice that $\mathcal{M}=\mathcal{M}^{-1}$. By taking inverse of the right polar decomposition $\mathcal{M}=|\mathcal{M}|_r\circ \mathcal{U}$ and notice that $\mathcal{U}$ is antiunitary, we obtain expression \eqref{eq:Minvo}. Then compare it with left polar decomposition, we have
  $\mathcal{U}=\mathcal{U}^{\ddag}=\mathcal{U}^{-1}$.
  
  2. Similar to the proof of statement 1.
\end{proof}

\begin{theorem}
   Let $\mathcal{M}\in \mathbf{B}^{(2)}(\mathcal{X})$ be an involution or a skew involution, then $\mathcal{M}$ is a generalized $\Theta$-conjugation or generalized skew $\Theta$-conjugation if and only if 
   \begin{equation}
       \Tr (|\mathcal{M}|_r)=\Tr (|\mathcal{M}|_l)=\dim \mathbf{B}(\mathcal{X}).
   \end{equation}
   This further implies that an involution $\mathcal{M}$ is a generalized $\Theta$-conjugation if and only if it is normal $[\mathcal{M},\mathcal{M}^{\ddag}]=0$.
\end{theorem}

\begin{proof}
    This is a direct corollary of theorem~\ref{thm:involution}.
\end{proof}

A discrete antilinear quantum instrument $\mathfrak{I}$ is a family of antilinear CP superoperators $\mathcal{M}_i, i\in I$ (with $I$ discrete) such that $\mathcal{M}_{\mathfrak{I}}=\sum_{i\in I} \mathcal{M}_i$ (which is automatically antilinearly CP) is antilinearly TP.

Many crucial classes of linear quantum instruments can be generalized to antilinear case:
\begin{itemize}
    \item Antlinear channel: $\mathbf{TPCP}_{\rm anti}(\mathcal{X},\mathcal{Y})$;
    \item Antilinear separable map: $\mathbf{SEP}_{\rm anti}(\mathcal{X},\mathcal{Y})$;
    \item Antilinear local operation and classical communication: $\mathbf{LOCC}_{\rm anti}(\mathcal{X},\mathcal{Y})$;
     \item Antilinear stochastical local operation and classical communication: $\mathbf{SLOCC}_{\rm anti}(\mathcal{X},\mathcal{Y})$.
\end{itemize}

Using linearization theorem, the distance measure of linear superoperators can be applied to antilinear superoperators.
\begin{equation}
    D_{\rm anti}(\mathcal{M},\mathcal{N}):=D(\mathcal{M}_L,\mathcal{N}_L).
\end{equation}
In this way, we can define operator norm, diamond norm and so on.

\section{Properties of $p$-norm fidelity and $p$-norm concurrence}
\label{App:FidelityP}

Many crucial properties of fidelity (resp. concurrence) can be generalized to $p$-norm fidelity (resp. $p$-norm concurrence). 
\begin{theorem}
    $F_p(\rho,\sigma)$ ($p\geq 1$) fulfill the following properties:
    \begin{enumerate}
        \item Monotone decreasing in $p$: $F_p(\rho,\sigma)\leq F_q(\rho,\sigma)$ for $p\geq q\geq 1$; for pure states $F_{p}(\psi,\varphi)=F_1(\psi,\varphi)$ for all $p$.
        
        \item Symmetric: $F_p(\rho,\sigma)=F_p(\sigma,\rho)$ for any $\rho, \sigma$. 
        
        \item Bounded: $0\leq F_p(\rho,\sigma)\leq 1$ and $F_p(\rho,\sigma)=1$ only if $\rho=\sigma$. 
        
        \item Unitary and antiunitary invariance: $F_p(\rho,\sigma)=F_p(U\rho U^\dagger,U\sigma U^\dagger)$ for any (anti)unitary operator $U$. 
        
        \item For density operators $\rho_1,\rho_2$ and $\sigma_1,\sigma_2$, we have $F_p(\rho_1\otimes \rho_2,\sigma_1\otimes \sigma_2)=F_p(\rho_1,\sigma_1)F_p(\rho_2,\sigma_2)$.
    
        \item Homogeneous: $F_p(\lambda \rho, \lambda \sigma)=\lambda F_p(\rho,\sigma)$ for $\lambda \in \mathbb{R}_{\geq 0}$.
    \end{enumerate}  
\end{theorem}

\begin{proof}
    1. Recall the algorithm to compute $F_p(\rho,\sigma)$ (in our case): first compute the singular values of $\sqrt{\rho}\sqrt{\sigma}$, say $\lambda_1,\cdots,\lambda_d$ (which are all non-negative numbers), then $F_p(\rho,\sigma)$ is equal to the $p$-norm of the vector $\vec{x}:=({\lambda_1},\cdots,{\lambda_d})$. In fact, for any vector $\vec{x}\in\mathbf{R}^d$, the $p$-norm $\|\vec{x}\|_p:=(\sum_{i=1}^dx_i^p)^{1/p}$ is monotone decreasing in $p$. This is well known in the literature. We may assume that $x_i>0$ which is enough for our case. Consider the function $g(p)=\|\vec{x}\|_p=\exp(\frac{1}{p}\log (\sum_{i=1}^d x_i^p))$ for $p\geq 1$. An easy computation shows that $g'(p)=\frac{1}{p}h(p)\|\vec{x}\|_p^{1-p}$, where $h(p)=\sum_{i=1}^d x_i^p\log x_i -\|\vec{x}\|_p^p\log\|\vec{x}\|_p$. As $x_i\leq \|\vec{x}\|_p$ for each $i$, taking $\log$, multiplying $x_i^p$ and taking sum on both sides yields $\sum_{i=1}^dx_i^p\log x_i\leq \|\vec{x}\|_p^p\log \|\vec{x}\|_p$. Thus $h(p)\leq0$, showing $g'(p)\leq0$ and hence $g(p)=\|\vec{x}\|_p$ is monotone decreasing in $p$. 
    
    2. This is obvious since $\sqrt{\rho}\sqrt{\sigma}$ and $\sqrt{\sigma}\sqrt{\rho}$ have the same nonzero singular values. 
    
    3. $F_p(\rho,\sigma)\geq0$ is obvious. By part 1 and the property of 1-fidelity, we have $F_p(\rho,\sigma)\leq F_1(\rho,\sigma)\leq 1$. Moreover, $F_p(\rho,\sigma)=1$ forces $F_1(\rho,\sigma)=1$, which in turn implies $\rho=\sigma$. 
    
    4. The case for unitary $U$ follows from the fact that $\sqrt{U\rho U^\dagger}=U\sqrt{\rho}U^\dagger$. The case for antiunitary $U$ follows from theorem~\ref{lemma:antilinear} and the fact that $F_p(\rho^*,\sigma^*)=F_p(\rho,\sigma)$, which is true since the singular values of $\sqrt{\rho^*}\sqrt{\sigma^*}$ and $\sqrt{\rho}\sqrt{\sigma}$ are conjugate (hence the same in our case because they are real numbers). 

    5. This is clear from $\| \sqrt{\rho_1\otimes \rho_2} \sqrt{\sigma_1\otimes \sigma_2}\|_p=\|\sqrt{\rho_1} \sqrt{\sigma_1}\|_p \| \sqrt{\rho_2} \sqrt{\sigma_2}\|_p$.

   6. This is clear from the definition.
\end{proof}

\begin{theorem}
   For $p$-concurrence $C_p(\rho,\sigma)$, the following statements hold:  
   \begin{enumerate}
       \item Symmetric: $C_p(\rho,\sigma)=C_p(\sigma,\rho)$ for any $\rho, \sigma$. 
       \item Bounded: $0\leq C_p(\rho,\sigma)\leq 1$.
       \item Homogeneous: $C_p(\lambda \rho, \lambda \sigma) =\lambda C_p(\rho,\sigma)$ for $\lambda \in \mathbb{R}_{\geq 0}$.
   \end{enumerate}
\end{theorem}

\begin{proof}
    1. This is clear since $\sqrt{\rho}\sqrt{\sigma}$ and $\sqrt{\sigma}\sqrt{\rho}$ have the same nonzero singular values. 
    
    2. This is clear from the fact that, for all singular values $
    \lambda_i$ of $\sqrt{\rho}\sqrt{\sigma}$, $0\leq \sum_i \lambda_i^p \leq 1$.
    
    3. This is clear from the definition.
\end{proof}

\section*{Declarations}
\paragraph{Conflict of interest:} All authors certify that there are no conflicts of interest for this work.

\paragraph{Data Availability Statement:} Data sharing is not applicable to this article as no datasets were generated or analyzed during the current study.

\bibliographystyle{apsrev4-1-title}
\bibliography{mybib}
	
\end{document}